\documentclass[12pt]{article}
\usepackage[utf8]{inputenc}
\usepackage{amsmath,amsfonts,amssymb,mathrsfs,amscd,amsthm}
\usepackage{cite}
\usepackage{graphicx}
\usepackage{diagbox}
\usepackage[affil-it]{authblk}

\usepackage{hyperref}
\hypersetup{
colorlinks=true,
linkcolor=red,
citecolor=blue,
urlcolor=blue
}
\usepackage{appendix}

\sloppy
\allowdisplaybreaks
\usepackage{multirow}

\usepackage[a4paper, margin = 1in]{geometry}

\def\R{\mathbb{R}}
\def\C{\mathbb{C}}
\newcommand{\pd}[2]{\dfrac{\partial #1}{\partial #2}}

\newcommand{\der}[2]{\dfrac{\mathrm{d} #1}{\mathrm{d} #2}}

\newcommand{\scalarproduct}[2]{\langle #1, #2\rangle}
\newcommand{\scalarproductleftright}[2]{\left\langle #1, #2\right\rangle}
\newcommand{\norm}[1]{\|#1\|}

\newtheorem{theorem}{Theorem}
\newtheorem{proposition}{Proposition}

\def\L{\mathcal{L}}     
\def\D{\mathcal{D}}     
\def\N{\mathcal{N}}     
\def\M{\mathcal{M}}     
\def\I{\mathbb{I}}      
\def\d{\mathrm{d}}      

\def\Tr{\mathrm{Tr}}

\title{\bf Control landscapes for high-fidelity generation of C-NOT and C-PHASE gates with coherent and environmental driving}
\author[1,2,*]{Alexander N. Pechen}
\author[1,2,\textdagger]{Vadim N. Petruhanov}
\author[1,2,\ddag]{Oleg~V.~Morzhin}
\author[1,2,\textsection]{Boris~O.~Volkov}
\affil[1]{\normalsize Department of Mathematical Methods for Quantum Technologies,\par
Steklov Mathematical Institute of Russian Academy of Sciences,\par
8~Gubkina str., Moscow, 119991, Russia;}
\affil[2]{University of Science and Technology MISIS,\par
6~Leninskiy prospekt, Moscow, 119991, Russia;}
\affil[*]{Corresponding author: apechen@gmail.com; \href{https://www.mathnet.ru/eng/person17991}{mathnet.ru/eng/person17991};}
\affil[$\dagger$]{vadim.petrukhanov@gmail.com; \href{https://www.mathnet.ru/eng/person176798}{mathnet.ru/eng/person176798};}
\affil[$\ddag$]{morzhin.oleg@yandex.ru; \href{https://www.mathnet.ru/eng/person30382}{mathnet.ru/eng/person30382};}
\affil[$\mathsection$]{borisvolkov1986@gmail.com; \href{https://www.mathnet.ru/eng/person94935}{mathnet.ru/eng/person94935}}

\date{}

\begin{document}
\maketitle

\begin{abstract}
High fidelity generation of two-qubit gates is important for quantum computation, since such gates are components of popular universal sets of gates. Here we consider the problem of high fidelity generation of two-qubit C-NOT and C-PHASE (with a detailed study of C-Z) gates in presence of the environment. We consider the general situation when qubits are manipulated by coherent and incoherent controls; the latter is used to induce generally time-dependent decoherence rates. For estimating efficiency of optimization methods for high fidelity generation of these gates, we study quantum control landscapes which describe the behaviour of the fidelity as a function of the controls. For this, we generate and analyze the statistical distributions of best objective values obtained by incoherent GRadient Ascent Pulse Engineering (inGRAPE) approach.  We also apply inGRAPE and stochastic zero-order method to numerically estimate minimal infidelity values. The results are different from the case of single-qubit gates and indicate a smooth trap-free behaviour of the fidelity.

\vspace{0.2cm}
{\bf Key words:} quantum control landscape; inGRAPE; coherent control; incoherent control; two-qubit gate; C-NOT; C-PHASE; open quantum system

\vspace{0.2cm}
{\bf MSC:} 81Q93, 34H05 
\end{abstract}

\section{Introduction}

Quantum computing forms an actively developing direction within the general area of quantum technologies~\cite{Acin2018, Schleich2016}. High fidelity generation of two-qubit gates is crucial for building universal quantum computing devices~\cite{NielsenChuangBook, KitaevBook, Valiev_Uspekhi_2005}. Important examples of two-qubit gates are the controlled not C-NOT and the controlled phase shift C-PHASE gates. They both act on the controlled qubit depending on the state of the controlling qubit. If the controlling qubit is in the state $|0\rangle$, then the controlled qubit is not changed. If the controlling qubit is in the state $|1\rangle$, then C-NOT changes a~controlled qubit state $\alpha|0\rangle+\beta |1\rangle$ to $\alpha|1\rangle+\beta|0\rangle$, and C-PHASE with phase shift $\lambda$ to $\alpha|0\rangle+e^{i\lambda}\beta |1\rangle$. The particular case $\lambda=\pi$ corresponds to the Controlled Z (C-Z) gate. While C-NOT gate belongs to Clifford group which can be efficiently simulated classically according to the Gottesman--Knill theorem~\cite{Gottesman_1998}, it is a component of a~well known universal set of gates ($H$, $T$, C-NOT), which also includes one-qubit $T$ (or $\pi/8$) and Hadamard~$H$ gates and which is sufficient for implementation of arbitrary quantum schemes~\cite{KitaevBook}. This motivates the importance of high fidelity generation of gates from a~universal set. Control landscapes for high-fidelity generation of $H$ and $T$ gates were analyzed in~\cite{PetruhanovPhotonics2023-2}. In this work, we perform the analysis of the control landscapes for high-fidelity generation of the C-NOT and C-PHASE gates, thereby completing the landscape analysis for high fidelity generation of gates from a universal set for qubits controlled by both coherent control and an environmental drive.  Such analysis of quantum control landscapes is crucial for estimating efficiency of practical optimization.

Using the environment as a control is important in our study. Qubits in real experimental situations are interacting with their environments, so that they are open quantum systems. The environment is often considered as having deleterious effects on the ability to manipulate qubits. The irremovable at the moment in experiments decoherence of qubit systems results in destruction of quantum properties necessary for fast computations, and the corresponding decrease of quantum gate fidelity is a~major obstacle for experimental creation of quantum computers. However, in some cases the environment can be exploited as a~useful resource, as for example for quantum computing with mixed stated~\cite{Aharonov1999, Tarasov_2002, CiracZoller_PhysRevLett_1995}. Another example is the use of quantum Zeno or inverse Zeno effect (Zeno control)~\cite{Facchi_Gorini_Marmo_Pascazio_Sudarshan_2000,Facchi_Nakazato_Pascazio_2001,Mancini_Bonifacio_2001,Tasaki_Tokuse_Facchi_Pascazio_2004}, or non-selective quantum von Neumann measurements~\cite{Pechen_Ilin_Shuang_Rabitz_2006}, or weak measurements, e.g., for protecting entanglement in finite temperature environment~\cite{Harraz_Cong_Nieto_2021,Hou_Shi_Wang_2023}, where the measurement apparatus is used as a control. This circumstance makes important developing and analyzing control methods for optimal generation of two-qubit gates in open quantum systems, including for cases when the environment is used as a control, as we consider below. 

Methods of optimal control of quantum systems currently attract a lot of attention motivated by various applications in quantum technologies~\cite{KochEPJQuantumTechnol2022,TannorBook2007,LetokhovBook2007, FradkovBook2007,BrifNewJPhys2010,Mancini_Manko_Wiseman_2005,DongPetersen2010,Shapiro_Brumer_Book_2ndEd_2012, GoughPhilTransRSocA2012,CongBook2014, DongWuYuanLiTarn2015, KochJPhysCondensMatter2016, AlessandroBook2021, KuprovBook2023, Giannelli_Sgroi_et_al_2022}. As it was shown in~\cite{Bondar2020}, there is no single algorithm which for any given set of controls and any pair of initial and target states answers whether the initial state can be transferred into the target state or not using these controls. Despite of this negative result, for a particular class of problems such an algorithm may exist. For numerical optimization various local and global search methods are used including methods based on the Pontryagin maximum principle~\cite{BoscainPRXQuantum2021}, the GRadient Ascent Pulse Engineering (GRAPE)~\cite{Khaneja_JMagnReson_2005, SchulteHerbruggenSporlKhanejaGlaser2011,deFouquieres2011, Lucarelli_2018,PetruhanovPechenJPA2023, Goodwin_Vinding_2023}, gradient flows~\cite{Glaser2010}, Krotov-type methods~\cite{Tannor1992, BaturinaMorzhinAiT2011, Jager2014, Goerz_NJP_2014_2021, Goerz_2021, BasilewitschNewJPhys2019, Morzhin_Pechen_UMN2019, FonsecaFanchiniLimaCastelano2022}, Hamilton--Jacobi--Bellman equations~\cite{Gough_Belavkin_Smolyanov_2005}, Chopped RAndom-Basis quantum optimization (CRAB)~\cite{CanevaPRA2011, MullerSaidJelezkoCalarcoMontangero2022}, Hessian-based methods such as the Broyden--Fletcher--Goldfarb--Shanno (BFGS) algorithm~\cite{EitanPRA2011, deFouquieres2011}, genetic algorithms~\cite{Judson1992}, Maday--Turinici~\cite{Maday_Turinici_2003}, quantum feedback  control~\cite{Wiseman_Milburn_1993, Doherty_Habib_Jacobs_Mabuchi_Tan_2000, Lloyd_Viola_2001, VanHandel_Stockton_Mabuchi_2005,Mancini_Wiseman_2007,Gough_2012, Schirmer_Jonckheere_Langbein_2018}, machine learning~\cite{DongIEEE2008, Niu_Boixo_Smelyanskiy_Neven_2018,DalgaardPRA2022, ShindiIEEE2022, Giannelli_Sgroi_et_al_2022}, speed gradient approach \cite{AnanevskiiFradkov_AiT_2005, Pechen_Borisenok_IFAC2015}, quantum reinforcement~\cite{DongIEEE2008} and quantum machine learning~\cite{Biamonte_Wittek_Pancotti_Rebentrost_Wiebe_Lloyd_2017}, the dual annealing algorithm (DAA)~\cite{Morzhin_Pechen_LJM_2021}, Lyapunov control~\cite{Khari_Rahmani_Daeichian_Mehri-Dehnavi_2022}, combined approach via the quantum optimal control suite (QuOCS)~\cite{QuOCS2022}, etc. Many methods were developed for two-qubit systems. For example, geometric control theory was used for optimal manipulation by two spin-1/2 particles~\cite{Sugny_2010}, feedback was exploited for two-qubit systems in a dissipative environment~\cite{Rafiee_Nourmandipour_Mancini_2016,Rafiee_Nourmandipour_Mancini_2017}, quasi-Newton algorithm was applied for developing a method for finding optimal two-qubit gates in recurrence protocols of entanglement purification~\cite{Preti_2022}, classical driving field was used for control of entanglement in two-qubit systems~\cite{Mojaveri_Dehghani_Taghipour_2022}, Pontryagin maximum principle was applied for time optimal realization of two-qubit entangler~\cite{Jafarizadeh_Naghdi_Bazrafkan_2022}, etc. Gradient based method GRAPE was originally developed for engineering NMR pulse sequences~\cite{Khaneja_JMagnReson_2005}. Recently, GRAPE~\cite{PetruhanovPechenJPA2023} and gradient projection methods~\cite{MorzhinPechenQIP2023} were extended to coherent and incoherent control of open quantum systems.

The efficiency of local search methods depends on the existence or absence of points of local but not global extrema (maxima for the problem of maximization of the objective, or minima for the problem of minimization of the objective) in quantum control landscapes. The analysis of extrema of quantum objective functionals is important, because such points, if they would exist, would inhibit from finding globally optimal controls using local search algorithms which, otherwise, could be very effective. For this reason points of local but not global extrema of an objective functional are called {\it traps}. High importance of the analysis of quantum control landscape was posed in~\cite{RabitzHsiehRosenthal2004}. After that, various results for closed and open quantum systems were obtained~\cite{HoRabitz2006, MoorePRA2011, Pechen2011, Pechen2012,IJC2012, FouquieresSchirmer,Larocca2018, Zhdanov2018, Russell2018,VolkovMorzhinPechenJPA2021, DalgaardPRA2022}. For controlled open quantum systems, study of the control landscape in the so called kinematic picture was performed by representing completely positive trace-preserving dynamics as points of a complex Stiefel manifold (strictly speaking, of some quotient space of a complex Stiefel manifold) for general $N$-level open quantum systems~\cite{OzaJPA2009}. Absence of traps was proved rigorously for a single qubit~\cite{Pechen2012}. Trapping behaviour was discovered for some quantum systems with special symmetries and with the dimension greater than two~\cite{Pechen2011,FouquieresSchirmer, VolkovPechenUMN2023, ElovenkovaQuantumReport2023}. 
Research on the structure of the landscape in the vicinity of a~critical point for the problem of one-qubit phase shift type gate generation was continued on a~small time scale. Absence of traps for the one-qubit gate generation problem was shown for a~rather large time in~\cite{Pechen2012} and for a~small time in~\cite{VolkovMorzhinPechenJPA2021}, where the unique singular critical point  was studied and shown to be a~saddle point. Numbers of positive and negative eigenvalues of the Hessian of the objective for some other examples of quantum systems were calculated in~\cite{MoorePRA2011}. The kinematic control landscape for transition probability in a~three-level quantum system with dynamical symmetry driven by coherent control assisted  by measurement-induced incoherent control was completely characterized~\cite{ElovenkovaQuantumReport2023}.

In this work, we involve a general approach for the analysis of quantum control landscapes based on the statistical distribution of best objective values obtained by a local search algorithm.  Using this approach, we perform the analysis of control landscapes for generation of the two-qubit quantum C-NOT and C-PHASE gates (for some $\lambda$, with a most detailed study of the C-Z gate) for three models of two-qubit quantum systems interacting with an environment when both Hamiltonian and dissipative aspects of the dynamics are controlled. The dynamics of the two-qubit system interacting with an environment is defined by a master equation with Gorini--Kossakowski--Sudarshan--Lindblad (GKSL) type's generator derived in the weak coupling limit. Coherent control enters in the Hamiltonian part of the dynamics and incoherent control in the dissipative part. The latter makes the decoherence rates generally time-dependent. To reveal whether symmetry between qubits affects the control landscape or not, three types of the Hamiltonian are considered, two of them in the absence of controls have assymetry between qubits, while one case has symmetric free Hamiltonian. The symmetric system is fully controllable in the absence of the environment while asymmetric systems are not. Two essentially different kinds of objective functionals for the problem of generation of C-NOT and C-PHASE gates are considered,  (1) defined by the sum of squared distances for the full basis of Hermitian $4\times 4$ matrices, and (2) defined by the sum of squared distances or its linear part for the three special states as proposed in~\cite{Goerz_NJP_2014_2021, Goerz_2021}. First, expressions for gradient and Hessian of these objective functionals have been derived using general approach. Then, an analysis of minimal reachable values of the objective functionals for various values of the parameters and intensity of the interaction with the environment has been made using gradient-free global search method DAA of stochastic optimization which is available in the Python library {\tt SciPy} (for its implementation see~\cite{dual_annealing_SciPy} and for discussion~\cite{Tsallis1988, TsallisStariolo1996, Xiang1997, Xiang2000}). To essentially speed up the computation for piecewise constant controls we use products of matrix exponents instead of solving differential evolution equations. After that, the quantum control landscapes for generation of the C-NOT and C-PHASE gates with various phases were studied using gradient method in a finite-dimensional space of controls. The gradient descent method (inGRAPE) has been applied for optimization of the two objective functionals taking into account non-negativity of the incoherent control. An influence of the magnitude of the decoherence rate on optimal values and the number of iterations of inGRAPE is studied. Histograms showing distributions of the optimized values of the functionals for the C-NOT gate and the C-PHASE gates with different phases are constructed. For the C-NOT and C-Z gates we also numerically obtain the dependence of minimal infidelities for various intensities of interaction of the qubits with the environment. 

For single-qubit $H$ and $T$ gates, our previous study~\cite{PetruhanovPhotonics2023-2} revealed a smooth distribution with only one peak for best obtained by inGRAPE values of fidelity for generation of $H$ gate (for all considered objectives), while the distribution of the optimized fidelity values  for generation of $T$ gate was found to have two distant peaks with the corresponding controls clearly and distantly separated in the control space. Importantly, in contrast to this case of single qubit $T$ gate generation and similarly to $H$ gate generation, here we obtain smooth single-peak distributions of best obtained by inGRAPE fidelities indicating a smooth  trap-free behaviour. In the two-qubit case studied in this work we find that only a strongly restricted control space leads to the appearance of a small second peak. Interesting that both $H$ and C-NOT with smooth single-peak distributions of best obtained by inGRAPE fidelity values are Clifford, while $T$ is not. Whether it is a coincidence or there is some underlying general reason, is an open question. Beyond two-qubit gates, three-qubit gates such as Toffoli gate, Fredkin gate, holonomic quantum gates~\cite{Maity_Purkait_2020} are also important for quantum computations and can be implemented.

The rest of our paper is organized as follows. In~Sec.~\ref{Section2}, we overview the general idea of the incoherent control method. Sec.~\ref{Section3} formulates various approaches to the problem of generating unitary gates under dissipative dynamics depending on coherent and incoherent controls. In Sec.~\ref{Sec:ME}, master equation for the two-qubit system and three different classes of Hamiltonian, with free dynamics either symmetric or non-symmetric with respect to interchange of the qubits, are provided. In~Sec.~\ref{Section5}, we adapt for the problem of two-qubit gate generation the gradient-based (inGRAPE) and zeroth-order optimization approach for piecewise constant controls. Sec.~\ref{Section6} describes the obtained numerical results. Appendices~A,~B,~C contain, correspondingly, short outline of the approach of M.Y. Goerz, D.M. Reich, and C.P. Koch to generation of unitary gates, realification of the dynamics and the objective functionals, and derivation of gradient and Hessian of the  objective functionals in the functional space of controls, both for general case and for the considered particular objectives.  Discussion  Sec.~\ref{Section_Conclusion} resumes the~work.

\section{Incoherent Control by the Environment}
\label{Section2}

Coherent control of quantum systems is realized by lasers (coherent radiation) or electromagnetic field and is deeply investigated~\cite{BrifNewJPhys2010, KochEPJQuantumTechnol2022, TannorBook2007, AlessandroBook2021, KuprovBook2023}. Coherent control mainly appears in the Hamiltonian part of the dynamics, but if the system interacts with the environment then this interaction can also transfer coherent control from the Hamiltonian part to the dissipative part~\cite{DannTobalinaKosloffPRA2020}. While it can be important in various situations, here we neglect this dependence. Incoherent control in opposite, basically appears in the dissipative part, while also can modify the Hamiltonian via Lamb shift. Its main effect is in affecting the relaxation rates between different pairs of states making them controlled and even generally time-dependent, so that the system dynamics is described by the master equation (in the system of units where the reduced Planck's constant~$\hbar$ equals~1) for the system density matrix $\rho$ ($\rho^\dagger = \rho \geq 0$, $\mathrm{Tr}\rho = 1$) 
\begin{equation}\label{Eq:ME2}
\der{\rho(t)}{t} = \L^{u, n}_t\rho(t):= -i [H^{u, n}_t, \rho(t)] + \varepsilon\underbrace{\sum_k\gamma_k(t) \D_k\rho(t)}_{\displaystyle\D^n_t\rho(t)},\quad \rho(0) = \rho_0.
\end{equation}
where the total Hamiltonian $H^{u, n}_t$ includes free Hamiltonian, interaction with coherent control $u(t)$, and possibly Lamb shift depending on incoherent control $n(t)$,  $\gamma_k(t)$ are generally time-dependent decoherence rates depending on incoherent control $n(t)$, $\D_k$ are some dissipative superoperators, $\D^{n}_t$ is the total dissipator depending on incoherent control, and the parameter $\varepsilon > 0$ describes strength of the coupling between the system and the environment. The notation $[A, B] = AB - BA$ denotes commutator of operators $A$ and $B$. Such approach with incoherently driven relaxation rates was proposed and studied in~\cite{PechenRabitz2006}, where two physical cases of $\D_k$ and $\gamma_k(t)$ corresponding to the well known in the theory of open quantum systems weak coupling and low density limits~\cite{SpohnLebowitzReview} with GKSL form were studied. Master equations derived beyond secular approximation as well as for ultrastrong-coupling and the strong-decoherence limits~\cite{Trushechkin_2021,Trushechkin_2022} may also be of interest for an investigation.

The weak coupling case has been most studied from the point of view of control since that. A~natural example of the environment for this case is the environment formed by incoherent (not necessarily thermally and even not necessarily constantly in time distributed) photons. For this case, the decoherence rates for transition between levels $i$ and $j$ with transition frequency $\omega_{ij}$ were considered in~\cite{PechenRabitz2006} as
\begin{equation*}
    \gamma_{ij}(t)=\pi\int \d {\bf k}\,\delta(\omega_{ij}-\omega_{\bf k})|g({\bf k})|^2(n_{\omega_{ij}}(t)+\kappa_{ij}),\qquad i,j = 1, \ldots, N,
\end{equation*}
where $\kappa_{ij}=1$ for $i>j$ and $\kappa_{ij}=0$ otherwise, $\omega_{\bf k}$ is the dispersion law for the bath (e.g., $\omega=|{\bf k}|c$ for photons, where ${\bf k}$ is photon momentum and $c$ is the speed of light), and $g({\bf k})$ describes coupling of the system to $\bf k$-th mode of the bath. For $i>j$, $\kappa_{ij}=1$ corresponds to spontaneous emission and $\gamma_{ij}$ determines rate of spontaneous and induced emission between levels $i$ and $j$. For $i<j$, $\gamma_{ij}$ determines rate of induced absorption. These decoherence rates appear in~(\ref{Eq:ME2}), where $k=(i,j)$ is multi-index. 
The magnitude of the decoherence rates affects the speed of decay of the off-diagonal elements of the system density matrix and determines the values of the diagonal elements towards which the diagonal part of the system density matrix evolves. 

In the above case with the environment formed by incoherent photons, incoherent control is the set of all functions $n=\{n_{\omega_{ij}}(t)\}$. Here $n_{\omega_{ij}}(t)$ is density of particles of incoherent photons at frequency $\omega_{ij}$.  
Since each incoherent control $n_{\omega_{ij}}(t)$ by its physical meaning is density of particles of the environment, it is non-negative, $n_{\omega_{ij}}(t)\ge 0$. Thus a~suitable way is to consider incoherent control functions as elements $n_{\omega_{ij}} \in L_{\infty}([0,T]; \mathbb{R_+})$ for some final time $T>0$. Coherent control has no such bound, but for symmetry we consider coherent control function also as element $u \in L_{\infty}([0,T]; \mathbb{R})$. In this case for each $(u,n)$ there is a~unique absolute continuous solution $\rho(t)$ of the equation~(\ref{Eq:ME2}). 

Incoherent control is represented by the function $n_\omega(t)$ of $\omega$ and $t$ which, for incoherent photons, described frequency distribution of photons, which can be non-thermal and even time depending. A~natural question is how useful such control can be. It this regard, it was shown that it can be quite rich --- combining coherent and incoherent controls  can be used for approximate generation of arbitrary density matrices of generic $N$-level quantum systems~\cite{PechenPRA2011}. Thus an advantage of incoherent control is that it allows, when combined with coherent control, to approximately steer {\it any} initial density matrix to {\it any} given target density matrix~\cite{PechenPRA2011}. This property approximately realizes controllability of open quantum systems in the set of all density matrices --- the strongest possible degree of quantum state control~\cite{PechenPRA2011}. This result has several important features. (1) It was obtained with dissipators $\D_k$ which correspond to a~physical class of master equations well known as the weak coupling limit in theory of open quantum systems. (2) It was obtained for generic quantum systems --- for almost all values of the physical parameters of this class of master equations and for multi-level quantum systems of an arbitrary dimension. (3) For incoherent controls in this scheme in some simple version an explicit analytic solution (not numerical) was obtained. (4) The scheme is robust to variations of the initial state --- the optimal control steers simultaneously {\it all} initial states into the target state, thereby physically realizing all-to-one Kraus maps previously theoretically exploited for quantum control~\cite{Wu_2007_5681}. While theoretically all-to-one Kraus maps were introduced a~while ago, the~first their experimental realization was done only recently for an open single qubit~\cite{Zhang_Saripalli_Leamer_Glasser_Bondar_2022}. Proposed in~\cite{PechenRabitz2006} general incoherent control scheme was further developed and patented for some applications, e.g, for interactively controlling multi-species atomic and molecular systems with $\rm Gd_2O_2S\!:\!Er^{3+}$ (6\%) samples~\cite{LaforgeJCP2018}. Beyond that, incoherent photons can play crucial role in other processes. For example, incoherent excitations were used for modelling  photosynthesis and photoisomerization in light-harvesting systems, leading to a molecular response with the time-independent steady state~\cite{Pachon_Botero_Brumer_2017,Brumer_2018}, as well as in quantum feedback-like mechanism for explaining charge separation in quantum photosinthesis~\cite{Kozyrev_Pechen_2022}. 

\section{Generation of Quantum Channels}
\label{Section3}
In this section, we formulate the problem of generation of unitary quantum gates under dissipative evolution driven by coherent and incoherent controls.

Denote the space of complex matrices as $\M_k = \C^{k\times k}$. A~linear map $\Phi: \M_N \to \M_N$ is called positive, if it preserves positivity of a matrix, $\Phi \rho \ge 0$ for any $\rho \ge 0$. A~quantum channel of an $N$-level quantum system is a~completely positive trace-preserving map, i.e., map $\Phi: \M_N \to \M_N$~\cite{Kraus_1983, Holevo_Book_2019} such that:
\begin{itemize}
    \item $\Phi$ is linear, i.e. $\Phi(\alpha\rho+\beta\sigma)=\alpha\Phi \rho+\beta\Phi\sigma$ for any $\alpha,\beta\in\mathbb C$, $\rho,\sigma\in\M_N$;
    \item trace-preserving, i.e. $\Tr(\Phi \rho) = \Tr \rho$ for any $\rho\in\M_N$;
    \item completely positive, i.e. for any $k\in\mathbb N$ the map $\Phi\otimes {\rm Id}_k$ is positive, where ${\rm Id}_k$ is identity map on the linear space $\M_k$.
\end{itemize}
Maps acting in the space of operators (matrices) are also called as superoperators. Quantum channels are also connected to non-commutative operator graphs~\cite{Amosov_2020}.

We operate with both density matrices and unitary gates (matrices) to measure how effectively does the dissipative evolution with the two types of controls approximates a~target unitary gate. For the problem of gate generation, we use the dynamical equation for the system evolution operator $\Phi^{u,n}_t$ which directly follows from~(\ref{Eq:ME2}):
\begin{equation}
\der{\Phi}{t} = \L^{u, n}_t \Phi,\qquad \Phi^{u,n}_0 = {\rm Id}_N,\qquad t \in [0,T].
\label{Eq:ME2_evolution_operator}
\end{equation}
It defines evolution of the density matrix $\rho(t) = \Phi^{u,n}_t \rho_0$.

Any unitary gate  $U$ determines a~quantum channel~$\Phi_U$ which acts on a~density matrix~$\rho$ as $\Phi_U\!\rho=U\rho U^\dagger$. There are various suggestions  how to formulate the objective functional measuring distance between the final dissipative evolution and a~target unitary gate. We formulate the problem of generation of $\Phi_U$ as  optimization of the following Mayer-type objective functional with a~fixed time~$T$:
\begin{equation}
    F_U(u, n) = J_U(\Phi^{u,n}_T) \to \inf_{u, n},
    \label{optimization_problem}
\end{equation}
where $J_U$ is a~functional on the set of quantum channels which satisfies
\begin{equation}
    J_U(\Phi) \geq 0,\qquad J_U(\Phi) = 0 \Leftrightarrow \Phi =\Phi_U,\qquad 
    \max_\Phi J_U(\Phi) = 1.
    \label{kinematic_functional_bounds}
\end{equation}
The kinematic objective functional $J_U$ is a~measure of closeness of the quantum channel $\Phi$ to the unitary quantum channel $\Phi_U$. We consider three types of $J_U(\Phi)$. 

First is defined by the squared Hilbert--Schmidt distance between the desired unitary quantum channel~$\Phi_U$ and a~quantum channel $\Phi$ (``sd'' is for squared distance):
\[
J_U^{\rm sd}(\Phi) = \frac1{2N^2}\|\Phi - \Phi_U\|^2,\qquad \|\Phi\|^2 = \Tr (\Phi^\dagger \Phi).
\] 

Second is defined by the mean value of the Hilbert--Schmidt distance between the actions of $\Phi$ and $\Phi_U$ on the set of three special density matrices $\{\rho_m\}_{m = 1}^3$:
\begin{equation*}
J_U^{\rm GRK,sd}(\Phi) = \frac{1}{6}\sum_{m = 1}^3 \|\Phi\rho_m -\Phi_U\rho_m\|^2.\label{second_functional}
\end{equation*}

Third is defined by the mean value of the Hilbert--Schmidt scalar product of the actions of $\Phi$ and $\Phi_U$ on the set of three special density matrices $\{\rho_m\}_{m = 1}^3$ (``sp'' is for scalar product):
\[
J_U^{\rm GRK,sp}(\Phi) = 1 - \frac{1}{3}\sum_{m = 1}^3 \dfrac{\Tr (\Phi\rho_m \Phi_U\rho_m)}{\Tr\rho_m^2}.
\] 
This functional is used for simplification as it is defined by the linear part of~$J_U^{\rm GRK,sd}$. 

The objective functionals~$J_U^{\rm GRK,sp}$~and~$J_U^{\rm GRK,sd}$ are defined for some specific set of three special density matrices $\{\rho_m\}_{k = 1}^3$ which was considered by M.Y.~Goerz, D.M.~Reich, and C.P.~Koch in~\cite{Goerz_NJP_2014_2021, Goerz_2021}, where it was shown for any unitary $U$ that if some quantum channel $\Phi$ satisfies $\Phi \rho_m=U\rho_mU^\dagger$ for $m = 1,2,3$, then $\Phi=\Phi_U$. Therefore achieving zero value for the objective functional~$J_U^{\rm GRK,sd}$ implies successfull generation of the target unitary gate.
Unlike the functionals~$J_U^{\rm sd}$ and~$J_U^{\rm GRK,sp}$, the functional $J_U^{\rm GRK,sp}$ does not satisfy conditions~(\ref{kinematic_functional_bounds}). For a~given set $\{\rho_m\}_{k = 1}^3$ this functional satisfies the conditions $J_U(\Phi) \geq 0$ and $J_U(\Phi) = 0 \Leftrightarrow \Phi =\Phi_U$ only on the set of quantum channels $\Phi$ which does not increase purity of the states $\rho_m$, i.e., such that $\Tr(\Phi\rho_m)^2 \le \Tr\rho_m^2$ for $m = 1,2,3$. 
For brevity we call these objective functionals as GRK-sd and GRK-sp. A~more detailed discussion of these objective functionals is provided in Appendix~\ref{Sec:AppendixA}.

According to the introduced above kinematic functionals, we specify the dynamical objective functional~$F_U(u,n)$ with a~fixed~$T$ as the following three functionals:
\[
F_U^{\rm sd}(u, n) = J_U^{\rm sd}(\Phi^{u,n}_T), \quad 
F_U^{\rm GRK,sd}(u, n) = J_U^{\rm GRK,sd}(\Phi^{u,n}_T), \quad F_U^{\rm GRK,sp}(u, n) = J_U^{\rm GRK,sp}(\Phi^{u,n}_T).
\]
Using these objective functionals, we consider two-qubit systems ($N=2$) and as target unitary gates $\text{C-NOT}$ and $\text{C-PHASE}$ gates which are defined in the computational basis by the unitary matrices
\[
\textrm{C-NOT} = \begin{pmatrix}
1 & 0 & 0 & 0\\
0 & 1 & 0 & 0\\
0 & 0& 0 & 1\\
0 & 0 & 1 & 0
\end{pmatrix},\quad \textrm{C-PHASE} (\lambda) = \begin{pmatrix}
1 & 0 & 0 & 0\\
0 & 1 & 0 & 0\\
0 & 0& 1 & 0\\
0 & 0 & 0 & e^{i\lambda}
\end{pmatrix}.
\]

\section{The Master Equation}
\label{Sec:ME}

Following~\cite{PechenRabitz2006,Morzhin_Pechen_LJM_2021}, consider the open quantum system consisting of two qubits driven by coherent $u$ and incoherent $n$ controls. We denote by $\mathbb{I}_2$ the $2\times 2$ identity matrix, and denote Pauli matrices and lowering and uppering one-qubit matrices as
\[
\sigma_x = \begin{pmatrix}
0 & 1 \\
1 & 0
\end{pmatrix},\quad
\sigma_y = \begin{pmatrix}
0 & -i \\
i & 0
\end{pmatrix},\quad
\sigma_z = \begin{pmatrix}
1 & 0 \\
0 & -1
\end{pmatrix},\quad \sigma^+ = \begin{pmatrix}
0 & 0 \\ 1 & 0
\end{pmatrix}, \quad
\sigma^- = \begin{pmatrix}
0 & 1 \\ 0 & 0
\end{pmatrix}.
\]
Master equation describing dynamics of the density matrix $\rho \in \C^{4\times4}$ has the form~(\ref{Eq:ME2}) with the superoperator $\L_t^{u, n} = -i [H_t^{u,n} , \,\cdot\, ] + \D^n_t$, where for the total Hamiltonian $H^{u,n}_t$ of the two qubits we consider three different types with the following general form ($k = 1,2,3$) 
\begin{equation}
H^{u,n}_{t,k}= H_{S,k} + \varepsilon H^n_{{\rm eff}, t} + V_k u(t).
\label{total_Hamiltonian}
\end{equation}
Here $H_{S,k}$ is the free Hamiltonian, $H^n_{\rm eff}$ is the effective Hamiltonian (Lamb shift), which depends on the incoherent control, and  $V_k$ is the Hamiltonian of interaction with coherent control. The parameter $\varepsilon > 0$ describes strength of the coupling between the system and its environment. 

System 1. Free and interaction Hamiltonians are
\begin{equation}
\begin{split}
H_{S,1} &=  \frac{\omega_1}{2} \left( \sigma_z \otimes \mathbb{I}_2 \right) 
+ \frac{\omega_2}{2} \left( \mathbb{I}_2 \otimes \sigma_z \right),\\
V_1 &= \sigma_x \otimes \mathbb{I}_2 + \mathbb{I}_2 \otimes \sigma_x. 
\end{split}
\label{system_1}
\end{equation}

System 2. Free and interaction Hamiltonians are
\begin{equation}
\begin{split}
H_{S,2} &= H_{S,1} =\frac{\omega_1}{2} \left( \sigma_z \otimes \mathbb{I}_2 \right) 
+ \frac{\omega_2}{2} \left( \mathbb{I}_2 \otimes \sigma_z \right),\\
V_2 &= \sigma_x \otimes \sigma_x.
\end{split}
\label{system_2}
\end{equation}

System 3. Free and interaction Hamiltonians are
\begin{equation}
\begin{split}
H_{S,3} &=
\sigma_z \otimes {\mathbb I}_2 + {\mathbb I}_2 \otimes \sigma_z + 
\alpha (\sigma_y \otimes \sigma_y + \sigma_z \otimes \sigma_z), \quad \alpha>0,\\
V_3 &= \sigma_x \otimes {\mathbb I}_2.
\end{split}
\label{system_3}
\end{equation} 
Systems 1 and 2 have the same free Hamiltonian with generally different frequencies of the first and the second qubit $\omega_1,\omega_2>0$.  Free Hamiltonians $H_{S,1}=H_{S,2}$  for different $\omega_1$ and $\omega_2$ are assymetric with respect to intercange of the qubits. Free Hamiltonian $H_{S,3}$ is in opposite, symmetric with respect to such interchange.  The difference between  interactions~$V_1$ and $V_2$ is that  for $V_1$ the same coherent control~$u$ addresses each qubit independently, while for $V_2$ the control~$u$ acts to couple the qubits. Such control is not most general and not most efficient, but simpler for realization while allows to generate some particular gates. The Hamiltonian $H_{S,3}+u(t)V_3$ of the third system provides a~fully unitarily controllable under coherent controls in the absence of interaction with the environment (i.e., when $\varepsilon = 0$). Systems 1 and 2 in the absence of the environment are non fully unitarily controllable, but some gates for these systems can still be generated.

The dissipative superoperator is the same for all cases. We assume that each qubit can be separately addressed by incoherent control. For systems 1 and 2, it can be done for example by using frequency dependent incoherent controls $n_{\omega_1}(t)$ and $n_{\omega_2}(t)$ at frequencies $\omega_1$ and $\omega_2$. For system 3, it can be done using either polarization dependence of incoherent control, or individual spatial addressing of the qubits. For brevity, we further denote by $n_1(t)$ and $n_2(t)$ incoherent controls acting on the first and second qubit. Then the effective Hamiltonian and the dissipative superoperator are
\begin{align}
H^n_{{\rm eff},t} &=
\Lambda_1 n_1(t) \left( \sigma_z \otimes \mathbb{I}_2 \right) + 
\Lambda_2 n_2(t) \left( \mathbb{I}_2 \otimes \sigma_z \right).\nonumber\\
\mathcal{D}^{n}(\rho) &= 
\mathcal{D}^{n_1}(\rho) + 
\mathcal{D}^{n_2}(\rho), \label{dissipator}
\\
\mathcal{D}^{n_j}_t(\rho) &=  \Omega_j (n_j(t) + 1) \left( 2 \sigma^-_j \rho \sigma^+_j - 
\sigma_j^+ \sigma_j^- \rho - \rho \sigma_j^ + \sigma_j^- \right) + \nonumber \\
&+ \Omega_j n_j(t) \left( 2\sigma^+_j \rho \sigma^-_j - 
\sigma_j^- \sigma_j^+ \rho - \rho \sigma_j^- \sigma_j^+ \right),
\qquad j = 1,2,\nonumber 
\end{align}
where $\Lambda_j>0$ and $\Omega_j>0$ are some constants depending on the details of interaction between the system and the environment, $\sigma_1^{\pm} = \sigma^{\pm} \otimes \mathbb{I}_2$, and $\sigma_2^{\pm} = \mathbb{I}_2 \otimes \sigma^{\pm}$. In this case, in~(\ref{Eq:ME2}) there are four time-dependent decoherence rates and four corresponding dissipators corresponding to transitions from the ground to the excited state ($\uparrow$) and from the excited to the ground state ($\downarrow$) of $j$th qubit ($j=1,2$),
\begin{align*}
\gamma_{j,\uparrow}&=\Omega_j n_j(t), \quad 
{\mathcal D}_{j,\uparrow}=2\sigma^+_j \rho \sigma^-_j - 
\sigma_j^- \sigma_j^+ \rho - \rho \sigma_j^- \sigma_j^+,\\
\gamma_{j,\downarrow}&=\Omega_j (n_j(t) + 1),\quad
{\mathcal D}_{j,\downarrow}=2 \sigma^-_j \rho \sigma^+_j - \sigma_j^+ \sigma_j^- \rho - \rho \sigma_j^ + \sigma_j^- .
\end{align*}
All these three dynamical systems belong to the class of bilinear homogeneous systems.

\section{Optimization Techniques}
\label{Section5}

\subsection{Piecewise-Constant Controls}

We consider  the class of piecewise constant controls at a~uniform time grid, which  provides reduction of the infinite-dimensional optimization problem to a~finite-dimensional optimization problem:
\begin{align}
u(t) = \sum_{k=1}^K \theta_{[t_k, t_{k+1})}(t) u_k, \quad n_l(t) = \sum_{k=1}^K \theta_{[t_k, t_{k+1})}(t) n_{k,l},  \quad l = 1,2, \quad t \in [0, T],
\label{piecewise_constant_controls}  
\end{align}
where time points $0 = t_0 < t_1 < \ldots < t_K = T$ are fixed, the step $\Delta t = t_{k+1} - t_k = T/K$, and $\theta_{[t_k, t_{k+1})}$ is the characteristic function of $[t_k, t_{k+1})$. For brevity denote the controls by $f = (u, n_1, n_2)$. Each such piecewise constant control is described by a~$3K$-dimensional vector with components $f_{k,\mu}$ for $\mu = 1,2,3$ and $k = 1, \ldots, K$. For DAA we consider the following constraints:
\begin{equation*}
|u(t)| \leq u_{\max}, \quad 0 \leq n_l(t) \leq n_{\max}, \quad l = 1,2, \quad u_{\max}, n_{\max} > 0, \quad t \in [0, T].
\end{equation*}

\subsection{inGRAPE Method}
\label{subsection5.1}

For the analysis of control landscapes for generation of C-NOT and C-PHASE gates, we use a~GRAPE-type local search method. We introduce incoherent version of GRAPE (inGRAPE) which adapts the known method to using both coherent and incoherent controls. Introduce the auxiliary control $w = (w_1, w_2)$ such that:
\begin{equation*}
    n_l(t) = w_l^2(t), \quad l=1,2, \quad t \in [0, T].
\end{equation*}
This takes into account that incoherent controls $n_1$ and $ n_2$ should be non-negative. Since controls $n_1$ and $n_2$ are piecewise constant~(\ref{piecewise_constant_controls}), controls $w_1$ and $w_2$ are also piecewise constant:
\begin{equation}
    w_l(t) = \sum_{k=1}^{K} \theta_{[t_k, t_{k+1})}(t) w_{k,l}, \quad l = 1,2.\label{piecewise_constant_control_w}
\end{equation}
Here $n_{k,l} = (w_{k,l})^2$. The parameters $u_k, w_{k,1}, w_{k,2}$ form the $3K$-dimensional control vector $g = (u, w_1, w_2)$ to be optimized. 

For control vector $g$, we adapt the GRAPE method for the  $3K$-dimensional optimization problem~(\ref{optimization_problem}). The  iterative formula for the $(j + 1)$th step of the algorithm is:
\begin{equation}
    g^{(j + 1)} = g^{(j)} - h_j \mathrm{grad}_g F_U(f^{(j)}), \quad j = 0, 1, \dots,     \label{2qubits_gradient_descent}
\end{equation}
where $F_U$ means $F_U^{\rm sd}$ or $F_U^{\rm GRK,sd}$ or $F_U^{\rm GRK,sp}$; $g^{(j)} = \left(u^{(j)}, w_1^{(j)}, w_2^{(j)}\right)$ is the $j$th iteration of the control $g$; $f^{(j)} = \left(u^{(j)}, {n_1^{(j)}}, {n_2^{(j)}}\right)$, where $n_i^{(j)}=\left(w_l^{(j)}\right)^2$, $l=1,2$; $h_j$ is the step size parameter at the $j$th iteration; $\mathrm{grad}_g = \left(\dfrac{\partial}{\partial g_{1,1}}, \dots , \dfrac{\partial}{\partial g_{K,3}} \right)$,  and the expression for the gradient of the objective functionals is provided in equations~(\ref{gradient_pconst_F1})--(\ref{gradient_pc_Psi_x}) from Appendix~\ref{Sec:AppendixC}. There is an iterative scheme for numerical computing this expressions for the gradients of the objective functionals which has linear time complexity $O(K)$ by number of intervals $K$. The integral parts of the expressions for the gradients are computed numerically by trapezoidal rule with $20$ segments. For example, the classical GRAPE scheme~\cite{Khaneja_JMagnReson_2005} proposes to use approximation of an integrand up to the first linear term. This makes the computation faster but less accurate comparing to approximate calculation of the integral.

The gradients of the objective functionals $F_U^{\rm GRK,sd}$ and $F_U^{\rm GRK,sp}$ are numerically computed much faster than the gradient of the objective functional $F_U^{\rm sd}$. This fact justifies using the objective functionals $F_U^{\rm GRK,sd}$ and $F_U^{\rm GRK,sp}$ instead of $F_U^{\rm sd}$ for optimization in the context of the problem of unitary gate generation.

Iterations of the gradient descent stop when the following condition is satisfied:
\begin{equation*}
\left|\mathrm{grad}_{g} F_U\left(f^{(j)}\right)\right| < \epsilon_{\rm acc},
\end{equation*}
which approximately corresponds to a~local extremum point, i.e., to a~point where the gradient equals zero according to the first order optimality condition. Depending on the functional, the accuracy parameter values are chosen as $\epsilon_{\rm acc} = 5\cdot10^{-3}, 2.5\cdot10^{-3}, 10^{-3}, 5\cdot10^{-4}, 3.125\cdot10^{-4}$.

We use an adaptive choice of the iteration step length~$h_j$, such that the step length depends on whether the functional value decreases or not, in order to provide a~strictly decreasing sequence of the objective functional values. More explicitly, we set a~starting step length $h_0$ and for every iteration, if the functional value decreases, $F_U\left(f^{(j + 1)}\right) < F_U\left(f^{(j)}\right)$, then we move to the next iteration with an increased step length: $h_{j+1} = a h_j$, $a \geqslant 1$; otherwise, if the functional value does not decrease, then we repeat the current iteration with a~decreased step: $h_{j}:= \beta h_{j}$, $0 < \beta < 1$ until the step length becomes small enough for a~successful decrease of the objective functional value. The following values of the algorithmic parameters are used: $h_0 = 1$, $a = 1.1$ and~$\beta = 0.5$.

\subsection{Zeroth-Order Stochastic Approach}
\label{subsection5.3}

We also use DAA for the search in the parallelepiped $[-u_{\max}, u_{\max}]^{K} \times [0, n_{\max}]^{2K}$. DAA belongs to zeroth-order stochastic methods (genetic algorithm, differential evolution, particle-swarm optimization, sparrow search algorithm, etc.) whose behaviour models try to find a~global minimizer of an~objective function without using its gradient. DAA combines the simulated annealing versions from~\cite{Tsallis1988, TsallisStariolo1996} together with local search strategy~\cite{Xiang1997, Xiang2000} 

We use DAA with the default settings~\cite{dual_annealing_SciPy} except of the following: 1)~the default {\tt initial\_temp} (initial artificial temperature) is increased to $2 \times 10^4$, i.e. near in four times; 2)~{\tt maxiter} is increased from the default $10^3$ to $2 \times 10^3$; 3)~{\tt maxfun} (from which number of an~objective's calls depends) is decreased from the default $10^7$ to $3 \times 10^4$; 4)~when it is noted, we set a special initial guess $f^{(0)}$ instead of an automatically generated. 

One can expect that DAA working with large $u_{\max},~n_{\max}$ may miss such a~good point ${\bf a}$ (with respect to some objective) that is in smaller subdomain. Because of the stochastic nature of DAA, one can expect that, for the same optimization problem, the results of different trials of DAA may differ significantly even with the same deterministic settings. That is why one can perform for the same optimization problem several trials of DAA and then select the lowest computed value of the objective over the trials. 

If DAA computes a~value, e.g., of $F_U^{\rm GRK,sd}$ that is close to zero then we could conclude that the minimization problem is numerically solved. However, if the global minimal value of $F_U^{\rm GRK,sd}$ differs from zero significantly, then a~problem is how to distinguish between this case and  the case when DAA stops because of, e.g., not been able of escaping of a~trap of $F_U^{\rm GRK,sd}$ (imagine that the last is possible). In this case one can adjust the parameter {\tt initial\_temp}, for which one should use ``higher values to facilitates a~wider search ... allowing dual\_annealing to escape local minima that it is trapped in''~\cite{dual_annealing_SciPy}. 

\section{Properties of the  Quantum Control Landscapes for Generation of the C-NOT and C-PHASE Gates}
\label{Section6}

To speed up the computation, we represent the evolution for piecewise constant controls as a~product of matrix exponents that takes into account the linearity of the ODEs in states and the possibility to solve the Cauchy problems in terms of matrix exponents and matrix multiplication for obtaining the objectives' values, without using such methods as Runge--Kutta 45, etc. It was implemented for stochastic optimization for general $N$-level quantum systems driven by coherent and incoherent controls with arbitrary parameters in~\cite{PechenRabitz2006}. In this work we also implement this approach in Python for numerical simulations for the inGRAPE involving, e.g., \texttt{scipy.linalg.expm} from {\tt SciPy}, which exploits the Pad\'e approximation for matrix exponents. Because the zeroth-order stochastic approach is used also for the class of piecewise constant controls, then the corresponding Python code incorporates such the extract from the aforementioned implementation of the inGRAPE method, as was already done in~\cite{PechenRabitz2006}. Parallel computations for the described below various cases were organized.

One can expect that 1) decreasing~$T$ leads to a~smaller degree of controllability of the system; 2)~increasing~$K$ leads to extending this~degree. 
Both for the inGRAPE and DAA computations we use $T=20$ and piecewise constant controls with $K=100$. 

The following values of the system's parameters are used: 
\begin{equation*}
\quad \omega_1 = 1, \quad \omega_2 = 1.1, \quad 
\Lambda_1 = \Lambda_2 = 0.5, \quad \Omega_1 =  \Omega_2 = 0.5,\quad \alpha = 0.2. 
\end{equation*}

For DAA, the following constraints are used to define the compact search domain:   
\begin{align}
\label{bounds_for_PC}
|u_k| \leq  30, \quad n_{k,l} \in [0, 10], \quad k = 1,\ldots, K, \quad l = 1,2.
\end{align}

\subsection{Minimal Infidelities Obtained With the Zeroth-Order Stochastic and inGRAPE Approaches} 
\label{subsection6.1}  

\begin{table}[ht!!]  
\centering
\caption{Minimal values of the objective functionals $F_U^{\rm sd}$, $F_U^{\rm GRK,sd}$, 
and $F_U^{\rm GRK,sp}$ computed using DAA and inGRAPE for $U = \text{C-NOT}$ and $U = \text{C-PHASE}(\lambda)$ 
with $\lambda \in \{\pi/6, ~\pi/3, ~ \pi/2, ~ 2\pi/3,~ \pi\}$ ($\text{C-PHASE}(\pi)=\text{C-Z}$) for all the three systems. The values~$T = 20$ and $K=100$. For DAA, the bounds~(\ref{bounds_for_PC}) are used. In each cell with objective values: the left column of the upper row shows the value of the objective functional obtained using the DAA and the right column of the upper row shows the difference between the value obtained using inGRAPE algorithm and the value obtained using the DAA algorithm, the second row shows the objective's value at the initial guess. The values are rounded to three digits after the~dot.}
\vspace{0.2cm}

1.~For $F_U^{\rm GRK,sd}$ \\
\footnotesize
\begin{tabular}{|c|p{20mm}|p{20mm}|p{20mm}|p{20mm}|p{21mm}|p{20mm}|}
\hline
\diagbox[width=4em]{Sys.}{\hspace{.1cm}Gate} & \text{~\quad C-NOT} & $\text{C-PHASE}\big(\frac{\pi}{6}\big)$ & $\text{C-PHASE}\big(\frac{\pi}{3}\big)$ & $\text{C-PHASE}\big(\frac{\pi}{2}\big)$ & $\text{C-PHASE}\big(\frac{2\pi}{3}\big)$ & $\text{\qquad C-Z}$ \\ 
\hline
\multirow{2}{*}{1} & 0.048~|~0.000 \newline \text{\qquad}0.109 & 0.053~|~$-0.008$ \newline \text{\qquad}0.114 & 0.058~|~0.002 \newline \text{\qquad}0.126 & 0.076~|~0.000 \newline \text{\qquad}0.140 & 0.094~|~0.001 \newline \text{\qquad}0.151 & 0.128~|~0.001 \newline \text{\qquad}0.157 \\ \hline
\multirow{2}{*}{2} & 0.071~|~$-0.011$ \newline \text{\qquad}0.152 & 0.064~|~$-0.004$ \newline \text{\qquad}0.156 & 0.067~|~$-0.007$ \newline \text{\qquad}0.167 & 0.068~|~$-0.007$ \newline \text{\qquad}0.181 & 0.069~|~$-0.009$ \newline \text{\qquad}0.194 & 0.066~|~$-0.007$ \newline \text{\qquad}0.205 \\  \hline
\multirow{2}{*}{3} & 0.096~|~$-0.034$ \newline \text{\qquad}0.177 & 0.083~|~$-0.024$ \newline \text{\qquad}0.172 & 0.089~|~$-0.028$ \newline \text{\qquad}0.172 & 0.092~|~$-0.033$ \newline \text{\qquad}0.173 & 0.087~|~$-0.026$ \newline \text{\qquad}0.175 & 0.096~|~$-0.035$ \newline \text{\qquad}0.176 \\ \hline
\end{tabular} 

\vspace{0.2cm}

\normalsize
2.~For $F_U^{\rm GRK,sp}$  \\
\footnotesize
\begin{tabular}{|c|p{20mm}|p{20mm}|p{20mm}|p{20mm}|p{21mm}|p{20mm}|}
\hline
\diagbox[width=4em]{Sys.}{\hspace{.1cm}Gate} & \text{~\quad C-NOT} & $\text{C-PHASE}\big(\frac{\pi}{6}\big)$ & $\text{C-PHASE}\big(\frac{\pi}{3}\big)$ & $\text{C-PHASE}\big(\frac{\pi}{2}\big)$ & $\text{C-PHASE}\big(\frac{2\pi}{3}\big)$ & $\text{\qquad C-Z}$ \\ 
\hline
\multirow{2}{*}{1} &  0.025~|~$-0.001$ \newline \text{\qquad}0.203 & 0.021~|~$-0.002$ \newline \text{\qquad}0.200 & 0.032~|~$-0.001$ \newline \text{\qquad}0.212 & 0.052~|~$-0.002$ \newline \text{\qquad}0.226 & 0.083~|~$-0.01$ \newline \text{\qquad}0.237 & 0.105~|~0.001 \newline \text{\qquad}0.243 \\ \hline
\multirow{2}{*}{2} &  0.088~|~$-0.005$ \newline \text{\qquad}0.227 & 0.083~|~0.001 \newline \text{\qquad}0.226 & 0.084~|~$-0.005$ \newline \text{\qquad}0.238 & 0.082~|~$-0.005$ \newline \text{\qquad}0.252 & 0.083~|~$-0.006$ \newline \text{\qquad}0.265 & 0.091~|~$-0.013$ \newline \text{\qquad}0.275 \\  \hline
\multirow{2}{*}{3} & 0.041~|~$-0.005$ \newline \text{\qquad}0.229 & 0.032~|~0.000 \newline \text{\qquad}0.212 & 0.032~|~0.002 \newline \text{\qquad}0.213 & 0.036~|~0.001 \newline \text{\qquad}0.214 & 0.037~|~0.005 \newline \text{\qquad}0.215 & 0.046~|~0.005 \newline \text{\qquad}0.217 \\ \hline
\end{tabular}  

\vspace{0.2cm}
\normalsize
3.~For $F_U^{\rm sd}$  \\
\footnotesize
\begin{tabular}{|c|p{20mm}|p{20mm}|p{20mm}|p{20mm}|p{21mm}|p{20mm}|}
\hline
\diagbox[width=4em]{Sys.}{\hspace{.1cm}Gate} & \text{~\quad C-NOT} & $\text{C-PHASE}\big(\frac{\pi}{6}\big)$ & $\text{C-PHASE}\big(\frac{\pi}{3}\big)$ & $\text{C-PHASE}\big(\frac{\pi}{2}\big)$ & $\text{C-PHASE}\big(\frac{2\pi}{3}\big)$ & $\text{\qquad C-Z}$ \\ 
\hline
\multirow{2}{*}{1} & 0.443~|~$-0.011$ \newline \text{\qquad}0.484 & 0.345~|~$-0.005$ \newline \text{\qquad}0.487 & 0.354~|~$-0.005$ \newline \text{\qquad}0.487 & 0.367~|~$-0.004$ \newline \text{\qquad}0.487 & 0.385~|~$-0.003$ \newline \text{\qquad}0.487 & 0.424~|~$-0.003$ \newline \text{\qquad}0.486 \\ \hline
\multirow{2}{*}{2} & 0.442~|~$-0.006$ \newline \text{\qquad}0.483 & 0.348~|~$-0.005$ \newline \text{\qquad}0.478 & 0.357~|~$-0.004$ \newline \text{\qquad}0.478 & 0.371~|~$-0.002$ \newline \text{\qquad}0.478 & 0.388~|~$-0.001$ \newline \text{\qquad}0.479 & 0.425~|~$-0.007$ \newline \text{\qquad}0.481 \\  \hline
\multirow{2}{*}{3} & 0.446~|~$-0.083$ \newline \text{\qquad} 0.492 & 0.425~|~$-0.056$ \newline \text{\qquad}0.491 & 0.429~|~$-0.065$ \newline \text{\qquad}0.490 & 0.429~|~$-0.065$ \newline \text{\qquad}0.489 & 0.431~|~$-0.062$ \newline \text{\qquad}0.489 & 0.439~|~$-0.074$ \newline \text{\qquad}0.489 \\ \hline
\end{tabular}   
\label{table1}
\end{table}  
 
Consider sequentially the objective functionals $F_U^{\rm sd}$, $F_U^{\rm GRK,sd}$, and $F_U^{\rm GRK,sp}$ for $U = \text{C-NOT}$ and $U = \text{C-PHASE}(\lambda)$ with $\lambda \in \{\pi/6, ~\pi/3, ~ \pi/2, ~ 2\pi/3,~ \pi\}$. Let $\varepsilon = 0.1$. 
For each such optimization problem, we carry out three trials of the DAA algorithm and one run of the inGRAPE algorithm. 

The DAA implementation in~\cite{dual_annealing_SciPy} allows, as a~variant, to set our initial guess ${\bf a}^{(0)}$. Due to this possibility, --- in order to compare the results of the DAA and inGRAPE optimization that is shown in Table~1, --- for the inGRAPE algorithm we take the initial guess  
\begin{equation}
u_k = \cos(0.3 t_k),\quad w_{k,l} = \exp\left[-5 \left(\frac{t_k}{T} - \frac{1}{2} \right)^2 \right] , \quad k=1,\ldots, K,\quad  l =1,2,\label{initial_guess_PC}
\end{equation} 
and for DAA we take $n_{k,l} = (w_{k,l})^2$ and the same $u_k$.

Because $K=100$, DAA works in the 300-dimensional search domain  taking into account the constraints~(\ref{bounds_for_PC}). 
After performing three trials of DAA, we select the minimal value among the corresponding three values of the considered objective. 
Taking into account the operation of DAA, the objectives' values obtained via DAA in Table~\ref{table1} do not pretend to be the values that are the best in principle for the considered class of controls. Nevertheless, Table~\ref{table1} shows that, for each system and gate, DAA finds visibly less values of all the considered objectives than the initial guess provides for these objectives. Table~\ref{table1} also shows for each gate and each system the difference between the best infidelity value obtained by the inGRAPE algorithm and the best infidelity value obtained by DAA. Many of these differences are negative that means that, for given initial guess, the inGRAPE algorithm provides better infidelity values than DAA.

\subsection{Analysis of the Control Landscapes for Generation of C-NOT and  C-PHASE}
\label{subsection6.2}

In this subsection, we perform a detailed study of high-fidelity control landscapes for generation of C-NOT and C-PHASE gates. We consider control landscapes for generation of C-NOT and C-Z for all three systems 1, 2, and 3, and generation of C-PHASE gate with $\lambda=\pi/2, \pi/3, \pi/4$, only for the system 3. For each case, we randomly generate $L=1000$ initial controls with a uniform distribution in some set, and then apply inGRAPE to optimize the objective starting from each of these initial conditions. inGRAPE stops when norm of the numerically computed gradient of the objective becomes smaller that the accuracy $\epsilon_{\rm acc}$. Thus we obtain for each case a distribution of optimized values of fidelity and plot histograms showing these distributions.

\begin{figure}[ht!]
    \centering
    \includegraphics[width=\linewidth]{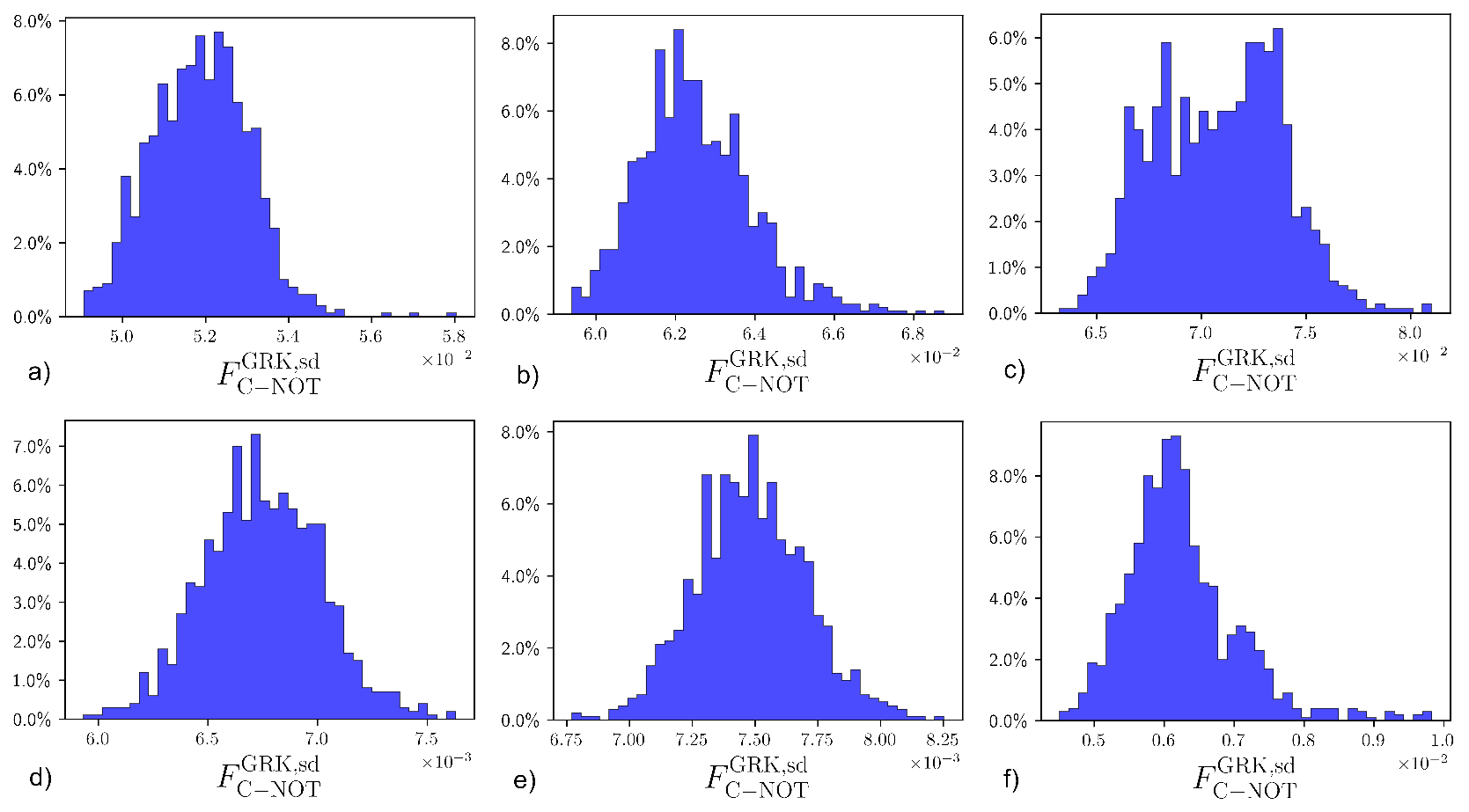}
    \caption{Histograms showing distributions of $L = 1000$ optimized by inGRAPE values of the objective functional $F^{\rm GRK, sd}_{\rm C-NOT}$ with decoherence rate $\varepsilon = 0.1$:  (a) for system~1, (b) for system~2, and (c) for system~3; with decoherence rate $\varepsilon = 0.01$: (d) for system~1, (e) for system~2, and (f) for system~3. The final time $T = 20$ and the number of time intervals $K = 100$.}
    \label{fig:Fig1}
\end{figure}

\begin{figure}[ht!]
    \centering
    \includegraphics[width=\linewidth]{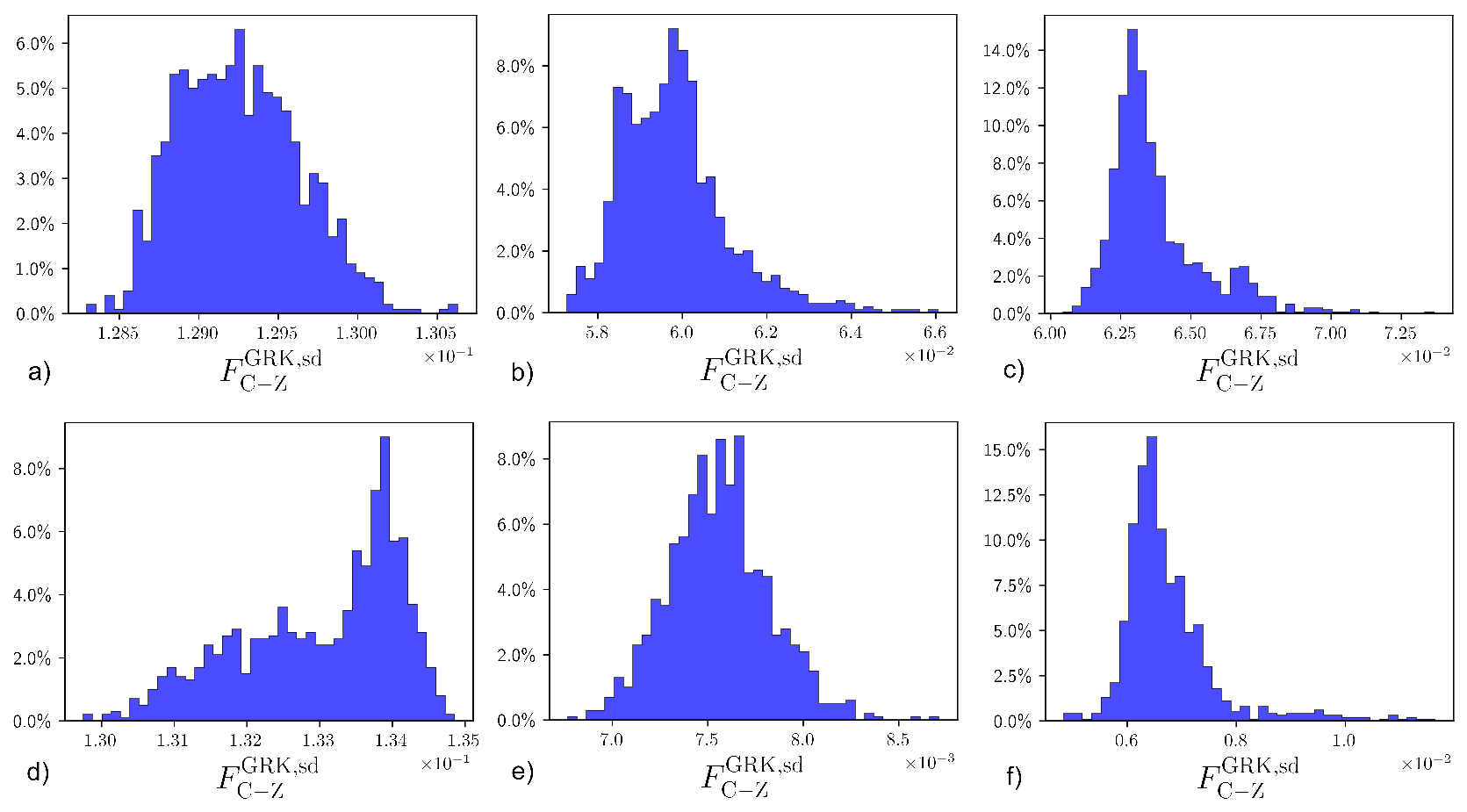}
    \caption{Histograms showing distributions of $L = 1000$ optimized by inGRAPE values of the objective functional $F^{\rm GRK, sd}_{\rm C-Z}$ with decoherence rate $\varepsilon = 0.1$:  (a) for system~1, (b) for system~2, and (c) for system~3; with decoherence rate $\varepsilon = 0.01$: (d) for system~1, (e) for system~2, and (f) for system~3. The final time $T = 20$ and the number of time intervals $K = 100$.}
    \label{fig:Fig5}
\end{figure}

\begin{figure}[ht!]
    \centering
    \includegraphics[width = \linewidth]{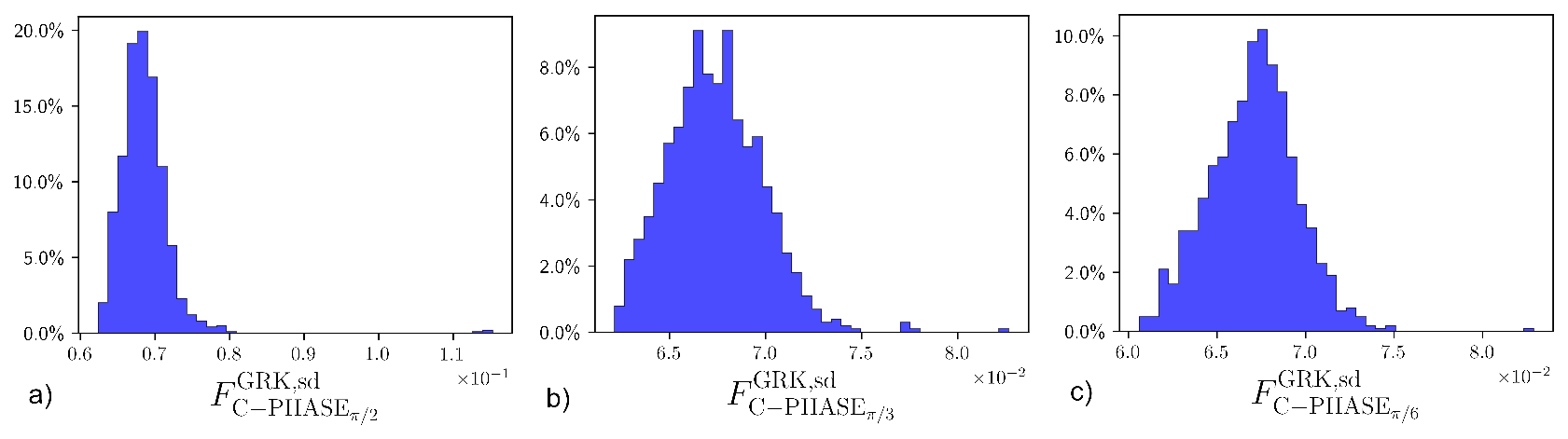}
    \caption{Histograms showing distributions of $L = 1000$ optimized by inGRAPE values of the objective functional $F^{\rm GRK, sd}_{{\rm C-PHASE}_\lambda}$ for system 3: (a)~$\lambda=\pi/2$, (b)~$\lambda=\pi/3$, and (c)~$\lambda=\pi/6$. The decoherence rate $\varepsilon = 0.1$, the final time $T = 20$, and the number of time intervals $K = 100$.}
    \label{fig:Fig2}
\end{figure}

\begin{figure}[ht!]
    \centering
    \includegraphics[width = \linewidth]{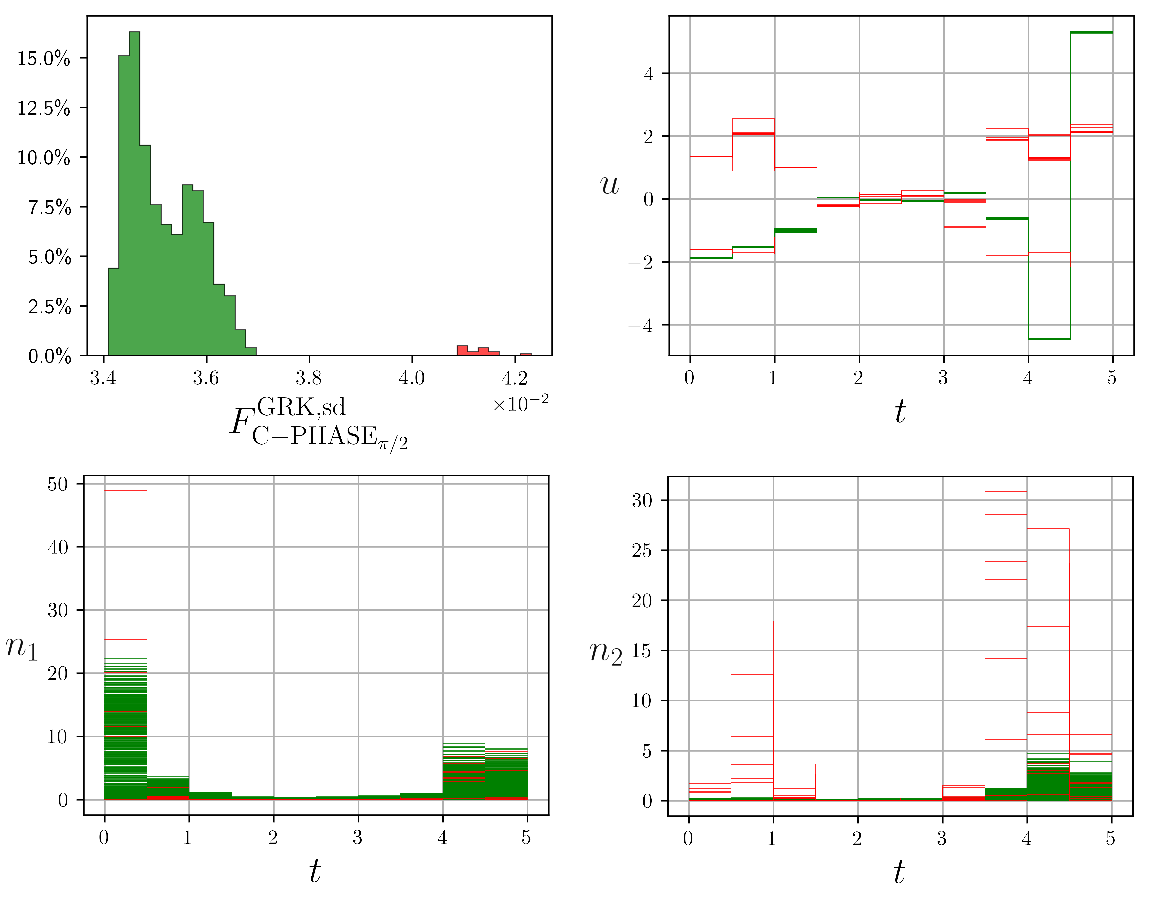}
    \caption{Results of $L = 1000$ runs of inGRAPE for minimizing the objective functional $F^{\rm GRK, sd}_{{\rm C-PHASE}_{\pi/2}}$ for the system 3 with smaller values of the final time $T = 5$ and number of time intervals $K = 10$. On subfigure (a) we plotted histograms showing the obtained distribution of the optimized values. While most values are concentrated around $3.5\times 10^{-2}$ (shown by green), there is a small fraction of values concentrated around $4.15\times 10^{-2}$ (shown by red). On subfigures~(b),~(c), and~(d) we plotted all 1000 of the optimized coherent controls and incoherent controls $n_1$ and $n_2$, respectively. The controls corresponding to green and red subgroups in subfigure (a) are separated in the space of controls.}\label{fig:twopeaks}
\end{figure}

\begin{figure}[ht!]
    \centering    \includegraphics[width=\columnwidth]{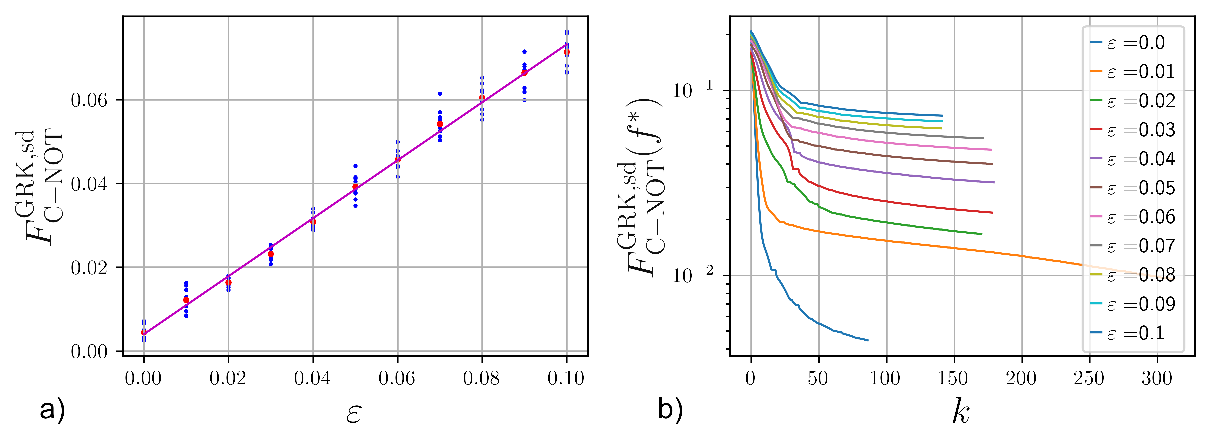}
    \caption{C-NOT gate generation for system~3. Subplot~(a): dependence of the optimized value of the objective functional $F^{\rm GRK, sd}_{\rm C-NOT}(f^{*})$ on the decoherence rate $\varepsilon$. It shows rate of decreasing the fidelity with increasing rate of decoherence.  Minimal value obtained for $\epsilon=0$ is $2.54\cdot10^{-3}$. Subplot~(b): a typical dependence of the objective functional $F^{\rm GRK, sd}_{\rm C-NOT}(f^{(j)})$ on the iteration number of the inGRAPE. Time $T = 20$, number of intervals $K = 200$, accuracy $\epsilon_{\rm acc} = 2.5\cdot10^{-3}$.}
    \label{fig:Fig4}
\end{figure}

\begin{figure}[ht!]
    \centering    \includegraphics[width=\columnwidth]{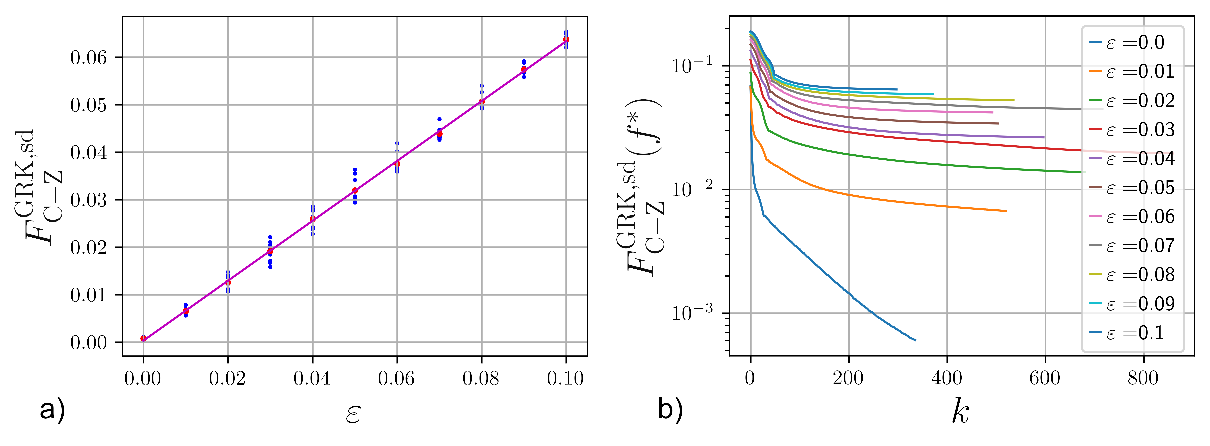}
    \caption{C-Z gate generation for system~3. Subplot~(a): dependence of the optimized value of the objective functional $F^{\rm GRK, sd}_{\rm C-Z}(f^{*})$ on the decoherence rate $\varepsilon$. It shows rate of decreasing the fidelity with increasing rate of decoherence.  Minimal value obtained for $\epsilon=0$ is $2.17\cdot10^{-4}$. Subplot~(b): a typical dependence of the objective functional $F^{\rm GRK, sd}_{\rm C-Z}(f^{(j)})$ on the iteration number of the inGRAPE. Time $T = 20$, number of intervals $K = 200$, accuracy $\epsilon_{\rm acc} = 2.5\cdot10^{-3}$.}
    \label{fig:Fig6}
\end{figure}

In Fig.~\ref{fig:Fig1} and Fig.~\ref{fig:Fig5}, we provide histograms containing results of the statistical experiments for generation of the C-NOT and C-Z gates for the system~1 [subfigure~(a)], the system~2 [subfigure~(b)], and the system 3 [subfigure~(c)]. In these experiments, we launched the gradient descent method $L = 1000$ times starting with random initial guesses $g^{(0)} = (u^{(0)}, w_1^{(0)}, w_2^{(0)})$ distributed uniformly in the hypercube $[0, 1]^{\times3K}$:
\begin{equation*}
    |u^{(0)}_{k}| \leq 1,\quad 0\leq w^{(0)}_{k,l} \leq 1, \quad l = 1,2, \quad k = 1,\dotsc,K.
\end{equation*}
Similarly, Fig.~\ref{fig:Fig2} contains  the numerical results for C-PHASE gate with $\lambda = \pi/2, \pi/3, \pi/6$ obtained only for the system~3.  

In all these figures, the obtained histograms for high-fidelity cases have smooth distributions with a single peak. Perhaps the only exception is for generation of the C-Z gate for the system~1 [subfigure~\ref{fig:Fig5}(d)], where two peaks are visible. However, in this case the obtained objective value is high and so that in this case fidelity of generation of the C-Z gate is low. It is interesting to compare these findings with control landscape analysis performed for single-qubit $H$ and $T$ gates in~\cite{PetruhanovPhotonics2023-2}. For $H$ gate, the distribution of best fidelity values was found to have on peak, whereas for $T$ gate the corresponding distribution was found to have two distantly separated peaks. Moreover, for $T$ gate optimized controls corresponding to these two peaks were found to form two groups distantly distributed in the space of controls. Our finding in this work shows that control landscapes for the considered C-NOT and C-PHASE gates for all three systems have a smooth shape with a single peak. It is similar to the corresponding distribution for the control landscape for generation of single-qubit Hadamard gate and different from the case of $T$ gate. 

Only if we strongly restrict the controls, we obtain a small second peak in the control landscape for the two-qubit gate generation. It is shown on Fig.~\ref{fig:twopeaks}, where the results of $L = 1000$ runs of inGRAPE for minimizing the objective functional $F^{\rm GRK, sd}_{{\rm C-PHASE}_{\pi/2}}$ for system 3 with a smaller value of the final time $T = 5$ and for more restricted piecewise constant controls with $K = 10$ are presented. Subfigure~(a) contains the histogram showing the obtained distribution of the optimized values. While most values are concentrated around $3.5\times 10^{-2}$ (shown by green), there is a small fraction of values concentrated around $4.15\times 10^{-2}$ (shown by red). On subfigures~(b),~(c), and~(d) we plotted all 1000 of optimized coherent controls and incoherent controls $n_1$ and $n_2$, respectively. One can see that controls corresponding to green and red subgroups in subfigure (a) are separated in the space of controls, similarly to the case of single-qubit $T$ gate as found in~\cite{PetruhanovPhotonics2023-2}.

For the objective functional $F^{\rm GRK, sd}_{\rm C-NOT}$~(\ref{second_functional}) describing C-NOT generation for the system~3, Fig.~\ref{fig:Fig4} shows dependence of (a)  optimized value of the objective functional $F^{\rm GRK, sd}_{\rm C-NOT}(f^{*})$ vs decoherence rate coefficient $\varepsilon$ and (b) a typical dependence of the objective functional value $F^{\rm GRK, sd}_{\rm C-NOT}(f^{(j)})$ on the iteration number of the gradient descent~(\ref{2qubits_gradient_descent}) for various values of the decoherence rate coefficient $\varepsilon = 0.00, 0.01, \ldots, 0.10$, with parameters: time $T = 20$, number of intervals $K = 200$, accuracy $\epsilon_{\rm acc} = 2.5\cdot10^{-3}$. As expected, the higher decoherence rate coefficient negatively impacts optimized values and, correspondingly, fidelity of the C-NOT gate generation.

\section{Discussion}
\label{Section_Conclusion}

For estimating efficiency of local search methods, an important problem is to analyze the control landscape of the underlying control problem, which describes the behaviour of the objective as a~functional of controls. In this work, we study quantum control landscapes for the problems of high fidelity generation of  two-qubit C-NOT and C-PHASE gates in open quantum systems when both coherent control and the environment are used as resources. Three types of objective functionals are considered: distance between Kraus maps as linear transformations, and two sums of distances between actual and target transformations of full basis set and of three special density matrices (that we call as GRK objective). For the study of quantum control landscapes, we develop a general approach based on the statistical distribution of best objective values obtained by a local search algorithm (inGRAPE in our case). First, we adapt gradient-free stochastic global search optimization methods to numerically estimate minimal infidelity values. Then, we derive expressions for gradient and Hessian of an arbitrary functional of a~final state and use them in the gradient-based (inGRAPE) approach to study the control landscapes of infidelities for generation of the C-NOT and C-PHASE gates. We build and analyze histograms approximating distributions of minimal objective infidelity values obtained by inGRAPE. We consider three essentially different two-qubit quantum systems, one of them is controllable and asymmetric with respect to exchange of the qubits, while two other are uncontrollable and not symmetric. For the C-NOT and C-Z gates we also numerically obtain the dependence of minimal infidelities vs the intensity of interaction between the qubits and the environment. Our findings in this work show that control landscapes for the high fidelity generation of the considered C-NOT and C-PHASE gates for all three systems have a smooth shape with a single peak. The only exception is for generation of the C-Z gate for the system~1, which has two peaks. However, in this case the obtained objective value is high and so that in this case fidelity of generation of the C-Z gate is low and does not describe the high-fidelity control landscape. It is interesting to compare these results with control landscapes for single-qubit $H$ and $T$ gates which were studied in~\cite{PetruhanovPhotonics2023-2}. For $H$ gate, the distribution of best fidelity values was found to have on peak, whereas for $T$ gate the corresponding distribution was found to have two peaks. Our findings that control landscapes for the C-NOT and C-PHASE gates have a smooth shape with a single peak is similar to the control landscape for generation of Hadamard gate and different from the control landscape for generation of $T$ gate. The underlying reason for these similarities and differences requires a further study.

\appendix
\appendixpage

\section{Goerz--Reich--Koch Approach}\label{Sec:AppendixA}
The following theorem was proved by M.Y.~Goerz, D.M.~Reich, and C.P.~Koch in~\cite{Goerz_NJP_2014_2021, Goerz_2021}.
\begin{theorem}
\label{GoerzTHM1}
Let $\rho_1=\sum_{n=1}^N \lambda_n|\phi_n\rangle \langle \phi_n|$, where $\{|\phi_n\rangle\}_{n=1}^N$ is an orthogonal basis in $\mathbb{C}^N$,  be a~density matrix with $N$ nonzero nondegenerate eigenvalues. Let a~density matrix $\rho_2$ be a~one dimensional orthogonal projector on $\mathbb{C}^N$ such that $\rho_2|\phi_n\rangle\neq 0$ for $n = 1,\ldots,N$. Let $\rho_3=\frac{1}{N}\mathbb{I}_N$. Let for some quantum channel $\Phi$ and for some unitary gate $U$ the three equalities $\Phi\rho_m=U\rho_mU^\dagger$ ($m = 1,2,3$) be satisfied. Then $\Phi \rho=U \rho U^\dagger$ for any density matrix $\rho$, i.e. $\Phi=\Phi_U$. 
\end{theorem}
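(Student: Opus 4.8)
The plan is to reduce to the case $U=\mathbb{I}_N$ and then show directly that any quantum channel fixing $\rho_1,\rho_2,\rho_3$ must be the identity map. For the reduction, given $\Phi$ with $\Phi\rho_m=U\rho_mU^\dagger$ for $m=1,2,3$, I would set $\Psi(\rho)=U^\dagger\Phi(\rho)U$; this is again a quantum channel (the composition of $\Phi$ with the unitary channel $\rho\mapsto U^\dagger\rho U$), and it satisfies $\Psi\rho_m=\rho_m$ for $m=1,2,3$. If one shows $\Psi=\mathrm{Id}_N$, then $\Phi=\Phi_U\circ\Psi=\Phi_U$, so it suffices to treat this reduced problem.

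Next I would exploit the three fixed states one at a time. From $\rho_3=\frac1N\mathbb{I}_N$ and $\Psi\rho_3=\rho_3$ one gets that $\Psi$ is unital, $\Psi(\mathbb{I}_N)=\mathbb{I}_N$. Being a unital positive map, $\Psi$ obeys the Kadison--Schwarz inequality $\Psi(A^2)\ge\Psi(A)^2$ for self-adjoint $A$. Applying it with $A=\rho_1$ gives $\Psi(\rho_1^2)\ge\Psi(\rho_1)^2=\rho_1^2$; since $\Psi$ is trace-preserving, $\Tr\bigl(\Psi(\rho_1^2)-\rho_1^2\bigr)=0$, and a positive operator of zero trace vanishes, so $\Psi(\rho_1^2)=\rho_1^2$. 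Iterating the same argument with $\rho_1^2,\rho_1^4,\dots$ yields $\Psi(\rho_1^{2^j})=\rho_1^{2^j}$ for all $j\ge0$. Because $\rho_1$ has $N$ distinct positive eigenvalues, the $N$ operators $\mathbb{I}_N,\rho_1,\rho_1^2,\rho_1^4,\dots,\rho_1^{2^{N-2}}$ are linearly independent (a generalized Vandermonde determinant in the eigenvalues) and therefore span the commutative algebra of operators diagonal in the basis $\{|\phi_n\rangle\}$. In particular each spectral projector $P_n=|\phi_n\rangle\langle\phi_n|$ lies in their span, so $\Psi(P_n)=P_n$ for every $n$.

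Finally I would pass to a Kraus decomposition $\Psi(\rho)=\sum_kA_k\rho A_k^\dagger$ with $\sum_kA_k^\dagger A_k=\mathbb{I}_N$. The identity $\Psi(P_n)=P_n$ reads $\sum_k(A_k|\phi_n\rangle)(A_k|\phi_n\rangle)^\dagger=|\phi_n\rangle\langle\phi_n|$; testing this against any vector orthogonal to $|\phi_n\rangle$ forces $A_k|\phi_n\rangle\in\mathbb{C}|\phi_n\rangle$ for every $k$, so each $A_k$ is diagonal in $\{|\phi_n\rangle\}$, say $A_k=\sum_n a_{k,n}|\phi_n\rangle\langle\phi_n|$. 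The same reasoning applied to $\Psi(\rho_2)=\rho_2$ with $\rho_2=|\psi\rangle\langle\psi|$ gives $A_k|\psi\rangle=c_k|\psi\rangle$; expanding $|\psi\rangle=\sum_n\langle\phi_n|\psi\rangle|\phi_n\rangle$ and using the hypothesis $\langle\phi_n|\psi\rangle\neq0$ for all $n$ forces $a_{k,n}=c_k$ independent of $n$, i.e. $A_k=c_k\mathbb{I}_N$. Then $\Psi(\rho)=\bigl(\sum_k|c_k|^2\bigr)\rho=\rho$ by trace preservation, so $\Psi=\mathrm{Id}_N$, and hence $\Phi=\Phi_U$.

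I expect the main obstacle to be the middle step: promoting the single fixed point $\rho_1$, together with unitality, to the statement that $\Psi$ fixes \emph{every} spectral projector of $\rho_1$. The Kadison--Schwarz plus trace-preservation equality argument is the crux, and the nondegeneracy hypothesis on $\rho_1$ is exactly what makes powers of $\rho_1$ recover the whole diagonal algebra; the condition $\langle\phi_n|\psi\rangle\neq0$ on $\rho_2$ is then what rigidifies the diagonal Kraus operators into scalars. (One can bypass the power iteration, if desired, by applying the Kadison--Schwarz/trace argument instead to the operator-convex resolvents $(\rho_1+t)^{-1}$, $t>0$, whose span over $t$ is again the diagonal algebra.)
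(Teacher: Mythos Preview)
Your proof is correct. Note, however, that the paper does not supply its own proof of this theorem: in Appendix~A it is stated with attribution to Goerz, Reich, and Koch and no argument is given, so there is no in-paper proof to compare against.

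One optional streamlining of your middle step: once you have established $\Psi(\rho_1)=\rho_1$ and $\Psi(\rho_1^2)=\rho_1^2=\Psi(\rho_1)^2$, the equality case of the Kadison--Schwarz inequality (Choi's multiplicative-domain theorem for $2$-positive unital maps) places $\rho_1$ in the multiplicative domain of $\Psi$, so that $\Psi(\rho_1 B)=\rho_1\Psi(B)$ for all $B$. By induction this yields $\Psi(\rho_1^k)=\rho_1^k$ for every integer $k\ge0$, and then the ordinary Vandermonde in the $N$ distinct eigenvalues of $\rho_1$ gives $\Psi(P_n)=P_n$ directly, without the generalized Vandermonde in the exponents $0,1,2,4,\dots,2^{N-2}$. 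Your iterated-squaring argument is also valid (the functions $t\mapsto t^{e}$ for distinct real exponents $e$ form a Chebyshev system on $(0,\infty)$, so the generalized Vandermonde is nonzero), but the multiplicative-domain route is a little shorter and makes clearer why nondegeneracy of $\rho_1$ is exactly the right hypothesis. The remaining Kraus-operator analysis using $\rho_2$ is standard and correctly carried out.
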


Theorem~\ref{GoerzTHM1} implies that achieving zero value for the objective functional~$F_U^{\rm GRK,sd}$, implies successful generation of the target unitary gate $U$. As the dimension $N$ increases, optimization for functionals $F_U^{\rm GRK,sd}$  and $F_U^{\rm GRK,sp}$ is significantly faster than for $F_U^{\rm sd}$, because the last one depends on $N^2$ operators from $\M_N$ instead of the three operators.
But as it  is  shown in Section~\ref{Section6}, small values of the objective functionals~$F_U^{\rm GRK,sd}$  and $F_U^{\rm GRK,sp}$   do not necessarily mean small values of the objective functional $F_U^{\rm sd}$. The function  $J_U^{\rm sd}$  defines the square of the metric on the space of superoperators. It can be shown that the functionals ~$J_U^{\rm GRK,sd}$  and $J_U^{\rm GRK,sp}$ are continuous in the topology given by this metric. It implies that when the functional $F_U^{\rm sd}$  reaches sufficiently small values, the functionals~$F_U^{\rm GRK,sd}$  and $F_U^{\rm GRK,sp}$ will also take on small values.
 
In our paper, following~\cite{Goerz_NJP_2014_2021, Goerz_2021}
we consider the following three density matrices which satisfy the conditions of Theorem~\ref{GoerzTHM1} for the case of two qubits: 
\begin{align}
\label{three_initial_states}
\rho_1 = {\rm diag}\left(\frac{2}{5}, \frac{3}{10}, \frac{1}{5}, \frac{1}{10} \right), \quad \rho_2 = \frac{1}{4}J_4, \quad \rho_3=\frac{1}{4}\mathbb{I}_4.
\end{align}
Here $J_4$ means the $4 \times 4$ matrix whose all elements are equal to~1. 
Note that the action of C-NOT and C-PHASE gates does not change two density matrices of the three density matrices~(\ref{three_initial_states}). Namely, if $U=\textrm{C-NOT}$, then 
$\Phi_U\rho_1= {\rm diag}\left(\frac{2}{5}, \frac{3}{10}, \frac{1}{10}, \frac{1}{5} \right) \neq \rho_1$, $\Phi_U\rho_2 = \rho_2$ and $\Phi_U\rho_3= \rho_3$.  If $U=\textrm{C-PHASE} (\lambda)$, then
$\Phi_U\rho_1= \rho_1$, $\Phi_U\rho_3= \rho_3$ and
$$\Phi_U\rho_2=
\frac{1}{4}\begin{pmatrix}
1 & 1 & 1 & e^{-i\lambda} \\
1 & 1 & 1 & e^{-i\lambda} \\
1 & 1 & 1 & e^{-i\lambda} \\
e^{i\lambda} & e^{i\lambda} & e^{i\lambda} & 1
\end{pmatrix} \neq \rho_2.$$

\section{Gradient and Hessian of the Objective Functionals}\label{Sec:AppendixB}

Practical application of gradient and Hessian-based algorithms relies on analytical formulas for the gradient and the Hessian of the objective functionals, as in GRAPE~\cite{Khaneja_JMagnReson_2005} of BGFS~\cite{EitanPRA2011, deFouquieres2011}. Moreover, explicit formulas for the Hessian evaluated at critical points of the objective functional are necessary for the analysis of presence or absence of the trapping behaviour of objective functional~\cite{RabitzHsiehRosenthal2004,Pechen2011,FouquieresSchirmer,VolkovPechenUMN2023}.
In this Appendix, we derive the formulas for the gradient and the Hessian of the objective functionals~$F_U^{\rm sd}$, $F_U^{\rm GRK,sd}$, and $F_U^{\rm GRK,sp}$ for any $N$. Then we substitute piecewise constant controls into the formulas for the gradient and obtain expressions used in this work when applying inGRAPE in Appendix~\ref{Sec:AppendixC}.

For generality we provide general formulas for functional variation for a~system whose controlled dynamics is determined by a~linear evolutionary equation. Consider as a~control space a~normed space $\mathcal U$ continuously embedded into the space $L_1=L_1([0,T];\mathbb R^d)$.  For example, someone can choose $L_1$ itself, $L_2=L_2([0,T];\mathbb R^d)$,  $L_\infty=L_\infty([0,T];\mathbb R^d)$ or the space of piecewise constant $\mathbb R^d$-valued functions with $L_\infty$-norm  as $\mathcal U$. 
The Hilbert space $L_2$ is convenient  for studying the Hessian~\cite{FouquieresSchirmer,VolkovMorzhinPechenJPA2021}; the space $L_\infty$ is used when there are restrictions on the value of the control function.
Equation~(\ref{Eq:ME2_evolution_operator}) is a~particular case of the general evolutionary equation with control $f \in \mathcal U$:
\begin{equation}
\label{QCL_Phi_eq}
\dot{\Phi}_t^f = \mathcal{L}^{f}_t\Phi_t^f, \quad \Phi^f_0 = \mathbb{I},
\end{equation}
where $\mathcal{L}_t^f:=\mathcal{K}+f_\mu(t)\mathcal{N}^\mu$. Here $\mathcal{K}$, $\mathcal{N}^\mu$
($\mu=1,\ldots,d)$ are linear operators on  a~finite dimensional space $\mathcal{V}$.   In this section, we summarize by repeating Greek indices.
Carath\'eodory's theorem implies that for $f\in L_1$  equation~(\ref{QCL_Phi_eq}) has an unique absolutely continuous solution \cite{FilippovBook}. 
The solution of~(\ref{QCL_Phi_eq})  is determined through a~chronological exponent:
\begin{multline}
\Phi_t^f =\;\hat{T} \exp\!\left({\displaystyle\int_0^t \mathrm{d} s\, \mathcal{L}^{f}_s}\right)= \mathbb{I} + \displaystyle \sum_{n =  1}^{\infty} \dfrac{1}{n!} \int_0^t  \dots \int_0^{t} \hat{T} \left\{\mathcal{L}^{f}_{\tau_n}\dots \mathcal{L}^{f}_{\tau_1}\right\} \mathrm{d} \tau_1 \dots \mathrm{d}\tau_n = \\ 
=\mathbb{I} + \int_0^t   \mathcal{L}_s^f \,\mathrm{d}s + \frac 12  \int_0^t \mathrm{d}\tau \!\int_0^{\tau} \!\hat{T} \left\{\mathcal{L}^{f}_\tau\mathcal{L}^{f}_s\right\} \, \mathrm{d}s  + \dots,
\label{QCL_ordered_exponential}
\end{multline}
where $\hat{T}\{\,\cdot\,\}$ is the
chronological ordering operator, which  sets the multipliers in the chronological order of their application in the composition of operators. 
The series~(\ref{QCL_ordered_exponential}) converges for finite-dimensional operators.
About the notion of chronological exponent, also see~\cite[Ch.~2]{AgrachevBook2004}. 

Let $\mathcal{J}$ be a~ $C^2$-smooth function on  $GL(\mathcal{V})$, the general linear group on $\mathcal{V}$.
Our goal is to calculate the gradient and the  Hessian of an arbitrary functional of the form
\begin{equation}
\label{arbitraryfunctional}
\mathcal{F}(f) = \mathcal{J}(\Phi_T^f),
\end{equation}
that is, to calculate the first and second derivatives in the sense of Fr\'echet in the functional space of controls.

Let us introduce the notation  ${(\N_t^{f})}^\mu = {\Phi_t^f}^{-1}\!\mathcal{N}^\mu \Phi_t^f.$ This operator exists due to invertibility of the evolution operator
\begin{equation*}
{\Phi_t^f}^{-1} = \hat{T}_a \exp\!\left( - {\displaystyle\int_0^t  \mathcal{L}^{f}_s \,\mathrm{d}s}  \right),
\end{equation*}
where $\hat{T}_a$ is the antichronological ordering operator, i.e., unlike the $\hat{T}$ operator, this operator rearranges the factors in the reverse chronological  order of their appearance.

\begin{proposition}
\label{propositionMainGrHess}
The first and the second Fr\'echet derivatives of the functional~(\ref{arbitraryfunctional}) in the function space~$\mathcal U$ are
\begin{align}
\label{QCL_functional_gradient}
\dfrac{\delta\mathcal{F}(f)}{\delta f_\mu(t)}  =& \dfrac{\delta\mathcal{J}\left(\Phi_T^f\right)}{\delta\Phi}  \Phi_T^f{(\N_t^{f})}^\mu,\\
\frac{\delta ^2 \mathcal{F}(f)}{\delta f_\mu(t_1)\,\delta f_\nu(t_2)}  =& \dfrac{\delta\mathcal{J}\left(\Phi_T^f\right)}{\delta\Phi}  \Phi_T^f \hat{T}\{{(\N^{f}_{t_1})}^\mu{(\N^{f}_{t_2})}^\nu\} + \dfrac{\delta^2\mathcal{J}\left(\Phi_T^f\right)}{\delta\Phi^2} \left(\Phi_T^f{(\N^{f}_{t_1})}^\mu, \Phi_T^f{(\N^{f}_{t_2})}^\nu \right).\nonumber
\end{align}
\end{proposition}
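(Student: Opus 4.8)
The plan is to compute the Fréchet derivatives of $\mathcal F(f) = \mathcal J(\Phi_T^f)$ by first differentiating the evolution operator $\Phi_T^f$ with respect to the control, and then composing with the derivatives of $\mathcal J$ via the chain rule. The core object is the first variation $\delta \Phi_T^f$ induced by a perturbation $f \mapsto f + \delta f$. Since $\mathcal L_t^{f+\delta f} = \mathcal L_t^f + \delta f_\mu(t)\,\mathcal N^\mu$, standard perturbation theory for linear ODEs (Duhamel's principle) gives, to first order,
\begin{equation*}
\delta\Phi_T^f = \int_0^T \mathrm ds\; \Phi_{T,s}^f\, \delta f_\mu(s)\,\mathcal N^\mu\, \Phi_s^f,
\end{equation*}
where $\Phi_{T,s}^f = \Phi_T^f (\Phi_s^f)^{-1}$ is the propagator from $s$ to $T$; this is the standard GRAPE-type identity. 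Rewriting $\Phi_{T,s}^f \mathcal N^\mu \Phi_s^f = \Phi_T^f (\Phi_s^f)^{-1}\mathcal N^\mu \Phi_s^f = \Phi_T^f (\mathcal N_s^f)^\mu$ and reading off the coefficient of $\delta f_\mu(t)$ yields $\delta\Phi_T^f/\delta f_\mu(t) = \Phi_T^f (\mathcal N_t^f)^\mu$. Composing with the first derivative of $\mathcal J$ gives the first formula in the proposition directly.

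For the Hessian I would expand $\Phi_T^{f+\delta f}$ to second order in $\delta f$, again via iterated Duhamel. The second-order term of the propagator is the familiar time-ordered double integral
\begin{equation*}
\int_0^T \mathrm dt_1 \int_0^{t_1}\mathrm dt_2\; \Phi_{T,t_1}^f\,\delta f_\mu(t_1)\mathcal N^\mu\,\Phi_{t_1,t_2}^f\,\delta f_\nu(t_2)\mathcal N^\nu\,\Phi_{t_2}^f,
\end{equation*}
which, after inserting resolutions of identity $\Phi_s^f(\Phi_s^f)^{-1}$ and symmetrizing in $(t_1,t_2)$, becomes $\tfrac12 \Phi_T^f\,\hat T\{(\mathcal N_{t_1}^f)^\mu (\mathcal N_{t_2}^f)^\nu\}$ contracted against $\delta f_\mu(t_1)\delta f_\nu(t_2)$. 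There are two contributions to $\delta^2\mathcal F$: the second-order variation of $\Phi_T^f$ paired with $\delta\mathcal J/\delta\Phi$ (this produces the time-ordered term), and the first-order variation of $\Phi_T^f$ inserted twice into the bilinear form $\delta^2\mathcal J/\delta\Phi^2$ (this produces the second term). Collecting these and reading off the coefficient of $\delta f_\mu(t_1)\delta f_\nu(t_2)$ gives the stated Hessian formula.

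The main technical obstacle is making the perturbation expansion rigorous at the level of Fréchet differentiability in the normed space $\mathcal U \hookrightarrow L_1$, rather than merely formal. One must show the remainder after the first- (resp. second-) order term is $o(\|\delta f\|)$ (resp. $o(\|\delta f\|^2)$) in the operator norm on the finite-dimensional space $\mathcal V$; this follows from Grönwall-type estimates together with the absolute convergence of the chronological exponential series~(\ref{QCL_ordered_exponential}) for finite-dimensional generators, using that $\|\delta f_\mu \mathcal N^\mu\|_{L_1}$ controls the perturbation uniformly in time. A secondary bookkeeping point is the careful handling of the time-ordering operator $\hat T$ and the antichronological inverse so that the factors $(\mathcal N_t^f)^\mu = (\Phi_t^f)^{-1}\mathcal N^\mu \Phi_t^f$ appear in the correct order after the resolution-of-identity insertions; once the symmetrization in $(t_1,t_2)$ is performed correctly, the factor $\tfrac12$ from the double integral cancels and the $\hat T\{\cdot\}$ notation absorbs the ordering. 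I would also note that continuity of the controlled dynamics and smoothness of $\mathcal J$ on $GL(\mathcal V)$ are exactly what is needed to invoke the chain rule for Fréchet derivatives in the final composition step.
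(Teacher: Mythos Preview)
Your proposal is correct and follows essentially the same route as the paper: perturb the control, expand the propagator to second order, read off the functional derivatives of $\Phi_T^f$, and compose with the derivatives of $\mathcal J$ via the chain rule. The only cosmetic difference is that the paper packages the perturbation expansion by introducing the interaction-picture operator $W_t^{f,\delta f}$ defined through $\Phi_t^{f+\delta f}=\Phi_t^f W_t^{f,\delta f}$, derives the ODE $\dot W = \delta f_\mu(t)\,(\mathcal N_t^f)^\mu W$, and then reads the first- and second-order terms directly off the Dyson series for $W$; this avoids the explicit ``insert $\Phi_s^f(\Phi_s^f)^{-1}$ and symmetrize'' bookkeeping you describe, but the content is identical to your iterated Duhamel argument.
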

\begin{proof}
To calculate the derivatives of the objective functional, consider the increment $\delta f$ in the neighborhood of the control $f$.
Introduce the operator
$W^{f,\delta f}_t$,
where
\begin{equation*}
\Phi^{f + \delta f}_t = \Phi_t^f W^{f,\delta f}_t.
\end{equation*}
This operator exists due to invertibility of the evolution operator.
The operator $\Phi^{f+\delta f}_t$   satisfies equation~(\ref{QCL_Phi_eq}) with control $f+\delta f$. This allows us to obtain the following equation for $W^{f,\delta f}_t$:
\begin{equation*}
\dot{W}^{f,\delta f}_t= \delta f_\mu(t) \, {\Phi_t^f}^{-1}\!\mathcal{N}^\mu \Phi_t^f W^{f,\delta f}_t, \quad W^{f,\delta f}_0 = \mathbb{I}.
\end{equation*}
Its solution is 
\begin{equation*}
W^{f,\delta f}_T = \hat{T} \exp\!\left(\displaystyle\int_0^T  \delta f_\mu(t) {(\N_t^{f})}^\mu \,\mathrm{d}t \right).
\end{equation*}
Using~(\ref{QCL_ordered_exponential}), we obtain the Taylor expansion
\begin{multline}
\label{QCL_Phi_tailor}
\Phi^{f+\delta f}_T=\Phi_T^f W^{f,\delta f}_T = \Phi_T^f + \int_0^T \delta f_\mu(t) \Phi_T^f{(\N_t^{f})}^\mu \, \mathrm{d}t +\\ 
+  \dfrac{1}{2} \int_0^T\int_0^T  \Phi_T^f \hat{T}\{{(\N_{t_1}^{f})}^\mu{(\N_{t_2}^{f})}^\nu\} \,\delta f_\mu(t_1) \delta f_\nu(t_2) \, \mathrm{d}t_1\,\mathrm{d}t_2 +  \dots 
\end{multline}
Then the first Fr\'echet derivative of the control evolution operator equals to
\begin{equation*}
\dfrac{\delta\Phi_T^f}{\delta f_\mu(t)} = \Phi_T^f{(\N_t^{f})}^\mu.
\end{equation*}
Hence, the derivative of the functional~(\ref{arbitraryfunctional}) equals to
\begin{equation*}
\frac{\delta\mathcal{F}^f}{\delta f_\mu(t)}  = \frac{\delta\mathcal{J}\left(\Phi_T^f\right)}{\delta \Phi}  \frac{\delta\Phi_T^f}{\delta f_\mu(t)}= \frac{\delta\mathcal{J}\left(\Phi_T^f\right)}{\delta\Phi} \Phi_T^f{(\N_t^{f})}^\mu.
\end{equation*}

Moreover, decomposition~(\ref{QCL_Phi_tailor}) gives an expression for the second-order derivative of the evolution operator:
\begin{equation*}
\dfrac{\delta^2\Phi_T^f}{\delta f_\mu(t_1) \delta f_\nu(t_2)} = \Phi_T^f \hat{T}\{{(\N_{t_1}^{f})}^\mu{(\N_{t_2}^{f})}^\nu\}.
\end{equation*}
The formula for the second derivative of an arbitrary functional~(\ref{arbitraryfunctional}) is obtained by differentiating~(\ref{QCL_functional_gradient}) as follows
\begin{equation*}
\frac{\delta ^2 \mathcal{F}(f)}{\delta f_\mu(t_1) \delta f_\nu(t_2)} = 
\dfrac{\delta\mathcal{J}\left(\Phi_T^f\right)}{\delta\Phi}  \Phi_T^f \hat{T}\{{(\N_{t_1}^{f})}^\mu{(\N_{t_2}^{f})}^\nu\} + \dfrac{\delta^2\mathcal{J}\left(\Phi_T^f\right)}{\delta\Phi^2} \left(\Phi_T^f{(\N_{t_1}^{f})}^\mu, \Phi_T^f{(\N_{t_2}^{f})}^\nu\right),
\end{equation*}
where $\dfrac{\delta^2\mathcal{J}}{\delta\Phi^2} (\,\cdot\,,\,\cdot\,)$, being a~second-order derivative, is a~bilinear map.
\end{proof}

Let the control $f=(f_1,f_2,f_3)=(u,n_1,n_2)$ belong to the interior of the set of admissible controls $\{f\in L_\infty([0,T],\mathbb{R}^3)\colon f_2\geq 0,f_3\geq 0\}$.  Proposition~\ref{propositionMainGrHess} implies that the first  and the second Fr\'echet  derivatives of the objective functional $F_U$
at the control point $f$ have the form
\begin{align*}
(F_{U})'[f](\delta f)&=\int_0^T\frac{\delta F_{U}[f]}{\delta f_\mu(t)}\delta f_\mu(t)dt,\\
(F_{U})''[f](\delta f_1,\delta f_2)&= \int_0^T\int_0^T\frac{\delta^2{F_{U}}[f]}{\delta f_\mu(t_1) \delta f_\nu(t_2)} (\delta f_1)_\mu(t_1)(\delta f_2)_\nu(t_2)dt_1dt_2,
\end{align*}
where $F_U$ means $F_U^{\rm sd}$ or $F_U^{\rm GRK,sd}$ or $F_U^{\rm GRK,sp}$.
\begin{proposition}\label{propositionGradientHessianF1}
Gradient and Hessian of the functional $F_{U}^{\rm sd}$ have the forms
\begin{align*}
\frac{\delta F_{U}^{\rm sd}[f]}{\delta f_\mu(t)}&=\frac{1}{N^2}\mathrm{Tr}\left({(\Phi_T^f-\Phi_U)}^\dagger\Phi_T^f{(\mathcal{N}^{f}_t)}^\mu\right).\\
\frac{\delta^2{F_{U}^{\rm sd}[f]}}{\delta f_\mu(t_1) \delta f_\nu(t_2)}
&=\frac{1}{N^2}\mathrm{Tr}\left((\Phi_T^f-\Phi_U)^\dagger\Phi_T^f\hat{T}\{{(\N_{t_1}^{f})}^\mu{(\N_{t_2}^{f})}^\nu\}\right)+\frac{1}{N^2}\mathrm{Tr}\left((\Phi_T^f {(\N_{t_1}^{f})}^\mu)^\dagger \Phi_T^f {(\N_{t_2}^{f})}^\nu\right).
\end{align*}
\end{proposition}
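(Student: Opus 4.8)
The plan is to apply Proposition~\ref{propositionMainGrHess} with $\mathcal{J}=J_U^{\rm sd}$, so that $F_U^{\rm sd}(f)=\mathcal{J}(\Phi_T^f)$, and to reduce the whole computation to finding the first two Fréchet derivatives of the kinematic functional $\mathcal{J}(\Phi)=J_U^{\rm sd}(\Phi)=\frac{1}{2N^2}\Tr\big((\Phi-\Phi_U)^\dagger(\Phi-\Phi_U)\big)$ with respect to $\Phi$. This functional is polynomial in the entries of $\Phi$, hence $C^2$ on $GL(\mathcal{V})$, so Proposition~\ref{propositionMainGrHess} applies; it already packages the chain rule through the chronological exponent, so it only remains to compute $\delta\mathcal{J}/\delta\Phi$ and $\delta^2\mathcal{J}/\delta\Phi^2$ and substitute. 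I would carry this out in the realified picture of Appendix~\ref{Sec:AppendixB}, where $\Phi_T^f$, $\Phi_U$, $\mathcal{N}^\mu$, and hence $(\mathcal{N}_t^f)^\mu$, all act on a real space, $\dagger$ is the adjoint for the Hilbert--Schmidt structure on superoperators, and $\Tr$ is the corresponding trace.

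Expanding $\mathcal{J}$ around $\Phi$ in an arbitrary direction $X$ gives, with no higher-order terms,
\[
\mathcal{J}(\Phi+X)=\mathcal{J}(\Phi)+\frac{1}{N^2}\Tr\big((\Phi-\Phi_U)^\dagger X\big)+\frac{1}{2N^2}\Tr(X^\dagger X),
\]
so that $\delta\mathcal{J}(\Phi)/\delta\Phi$ is the linear functional $X\mapsto\frac{1}{N^2}\Tr\big((\Phi-\Phi_U)^\dagger X\big)$ and $\delta^2\mathcal{J}(\Phi)/\delta\Phi^2$ is the symmetric bilinear form $(X,Y)\mapsto\frac{1}{N^2}\Tr(X^\dagger Y)$. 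Feeding $X=\Phi_T^f(\mathcal{N}_t^f)^\mu$ into the gradient formula of Proposition~\ref{propositionMainGrHess} yields $\frac{\delta F_U^{\rm sd}[f]}{\delta f_\mu(t)}=\frac{1}{N^2}\Tr\big((\Phi_T^f-\Phi_U)^\dagger\Phi_T^f(\mathcal{N}_t^f)^\mu\big)$; feeding $X=\Phi_T^f\hat{T}\{(\mathcal{N}_{t_1}^f)^\mu(\mathcal{N}_{t_2}^f)^\nu\}$ into the first Hessian term and $(X,Y)=(\Phi_T^f(\mathcal{N}_{t_1}^f)^\mu,\Phi_T^f(\mathcal{N}_{t_2}^f)^\nu)$ into the second reproduces exactly the two terms in the claimed Hessian.

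The one point that needs care, and which I expect to be the main (minor) obstacle, is the real/complex bookkeeping. A priori the Fréchet derivative of $\mathcal{J}$ is a real-linear functional on the real vector space of superoperators, hence represented by $X\mapsto\frac{1}{N^2}\mathrm{Re}\,\Tr\big((\Phi-\Phi_U)^\dagger X\big)$; likewise the Hessian bilinear form would carry a $\mathrm{Re}$. After realification the operators in play are genuine real matrices, so all the traces appearing in the formulas are automatically real and the $\mathrm{Re}$ may be dropped, which is exactly why the statement is written with $\Tr$ rather than $\mathrm{Re}\,\Tr$. I would insert one explicit sentence to this effect, referring back to Appendix~\ref{Sec:AppendixB}, so that the adjoint $\dagger$ on superoperators, the pairing, and the trace $\Tr$ are unambiguous; with that remark the two substitutions above are immediate and the proposition follows.
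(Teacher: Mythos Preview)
Your approach is correct and is exactly the (implicit) route the paper takes: Proposition~\ref{propositionGradientHessianF1} is stated without its own proof, as a direct specialization of Proposition~\ref{propositionMainGrHess}, and your computation of $\delta\mathcal J/\delta\Phi$ and $\delta^2\mathcal J/\delta\Phi^2$ for the quadratic $J_U^{\rm sd}$ followed by substitution is precisely that specialization. One small correction: the realification you invoke to drop the $\mathrm{Re}$ is developed in Appendix~\ref{Sec:AppendixC}, not Appendix~\ref{Sec:AppendixB}; alternatively, you can simply observe that $\Phi_T^f$, $\Phi_U$ and each $(\mathcal N_t^f)^\mu$ preserve Hermiticity, hence have real matrices in a Hermitian basis of $\mathcal M_N$, so the traces in question are automatically real.
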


\begin{proposition}\label{propositionGradientHessianF2}

Gradient and Hessian of the functional $F^{\rm GRK,sd}_{U}$ have the forms
\begin{align*}
\frac{\delta F^{\rm GRK,sd}_{U}[f]}{\delta f_\mu(t)}=&\sum_{m=1}^3\frac 13\mathrm{Tr}\left((\Phi_T^f\rho_{m}-\Phi_U\rho_{m})\Phi_T^f{(\mathcal{N}^{f}_t)}^\mu\rho_{m}\right).\\
\frac{\delta^2{F^{\rm GRK,sd}_{U}[f]}}{\delta f_\mu(t_1) \delta f_\nu(t_2)}=&\sum_{m=1}^3\frac 13\mathrm{Tr}\left((\Phi_T^f\rho_{m}-\Phi_U\rho_{m})\Phi_T^f\hat{T}\{{(\N_{t_1}^{f})}^\mu{(\N_{t_2}^{f})}^\nu\}\rho_{m}\right)\\
&+\sum_{m=1}^3\frac 13\mathrm{Tr}\left(\Phi_T^f {(\N_{t_1}^{f})}^\mu \rho_{m}\Phi_T^f {(\N_{t_2}^{f})}^\nu\rho_{m}\right).
\end{align*}
\end{proposition}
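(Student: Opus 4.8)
The plan is to obtain both formulas as a direct specialization of Proposition~\ref{propositionMainGrHess} applied with $\mathcal{J}=J_U^{\rm GRK,sd}$, so that the only genuine computation is the evaluation of the first and second Fr\'echet derivatives of $\mathcal{J}$ on the space of superoperators, after which the stated trace expressions drop out once one checks that every superoperator produced by the dynamics preserves Hermiticity.

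First I would fix the abstract setting: take $\mathcal{V}=\M_N$ regarded as a real vector space, so that $f\mapsto\Phi_T^f$ is a smooth $GL(\mathcal{V})$-valued map (invertibility of $\Phi_T^f$ was recorded just before Proposition~\ref{propositionMainGrHess}), and note that $\mathcal{J}(\Phi)=\frac16\sum_{m=1}^3\|\Phi\rho_m-\Phi_U\rho_m\|^2$ is a quadratic polynomial in the real coordinates of $\Phi$, hence of class $C^2$. Expanding
\[
\|(\Phi+\delta\Phi)\rho_m-\Phi_U\rho_m\|^2=\|\Phi\rho_m-\Phi_U\rho_m\|^2+2\,\mathrm{Re}\,\Tr\big((\Phi\rho_m-\Phi_U\rho_m)^\dagger\,\delta\Phi\,\rho_m\big)+\|\delta\Phi\,\rho_m\|^2
\]
reads off the first derivative $\dfrac{\delta\mathcal{J}(\Phi)}{\delta\Phi}(\delta\Phi)=\dfrac13\sum_m\mathrm{Re}\,\Tr\big((\Phi\rho_m-\Phi_U\rho_m)^\dagger\,\delta\Phi\,\rho_m\big)$ and, from the direction-independent quadratic term, the Hessian bilinear form $\dfrac{\delta^2\mathcal{J}(\Phi)}{\delta\Phi^2}(A,C)=\dfrac13\sum_m\mathrm{Re}\,\Tr\big((A\rho_m)^\dagger C\rho_m\big)$; the normalization $\tfrac16\cdot2=\tfrac13$ must be threaded consistently through both.

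Next I would substitute into Proposition~\ref{propositionMainGrHess}: for the gradient take $\delta\Phi=\Phi_T^f(\N_t^f)^\mu=\delta\Phi_T^f/\delta f_\mu(t)$; for the Hessian take $A=\Phi_T^f(\N_{t_1}^f)^\mu$, $C=\Phi_T^f(\N_{t_2}^f)^\nu$ in the bilinear term and $\Phi_T^f\hat{T}\{(\N_{t_1}^f)^\mu(\N_{t_2}^f)^\nu\}=\delta^2\Phi_T^f/\delta f_\mu(t_1)\delta f_\nu(t_2)$ in the first term. The decisive simplification is that all of these superoperators preserve Hermiticity: $f\mapsto\Phi_T^f$ is a family of quantum channels, and the superoperators above are real first- and second-order directional derivatives of that family, hence preserve Hermiticity as well (equivalently, $(\N_t^f)^\mu=(\Phi_t^f)^{-1}\mathcal{N}^\mu\Phi_t^f$ is a composition of Hermiticity-preserving maps, using that each $\mathcal{N}^\mu$ is a control-Hamiltonian commutator or a GKSL dissipator block and that the inverse of an invertible Hermiticity-preserving map is again of that type). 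Since $\rho_m$, $\Phi_T^f\rho_m$ and $\Phi_U\rho_m$ are Hermitian, every trace that appears is the trace of a product of Hermitian matrices and is therefore real, so the $\mathrm{Re}$'s and the remaining daggers may be dropped; this yields precisely the claimed formulas for $\delta F_U^{\rm GRK,sd}/\delta f_\mu(t)$ and $\delta^2 F_U^{\rm GRK,sd}/\delta f_\mu(t_1)\delta f_\nu(t_2)$.

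The step I expect to be the main obstacle is not any single calculation but making the reality/realification bookkeeping airtight: stating precisely in what sense $\mathcal{J}$, built from the antilinear operation $\dagger$, is Fr\'echet differentiable (it is, as a real polynomial on the realified space of superoperators), and carefully justifying that the particular directions arising from the dynamics are Hermiticity-preserving so that the $\mathrm{Re}$'s genuinely vanish rather than being dropped merely by analogy. The argument is structurally the same as the proof of Proposition~\ref{propositionGradientHessianF1}, the only difference being that there the functional $J_U^{\rm sd}$ does not Hermitian-trivialize against the $\rho_m$'s, which is why a $\dagger$ survives visibly in that statement.
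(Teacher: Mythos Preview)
Your proposal is correct and follows exactly the route the paper intends: the paper states this proposition without proof, as an immediate specialization of Proposition~\ref{propositionMainGrHess} with $\mathcal{J}=J_U^{\rm GRK,sd}$, and your computation of $\delta\mathcal{J}/\delta\Phi$ and $\delta^2\mathcal{J}/\delta\Phi^2$ together with the Hermiticity argument that removes the $\mathrm{Re}$'s and daggers is precisely what is needed to land on the stated trace formulas. Your additional care about realification and Hermiticity preservation of the dynamical directions is a point the paper leaves implicit, so if anything you are being more thorough than the source.
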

\begin{proposition}
 \label{propositionGradientHessianF3}

Gradient and Hessian of the functional $F_U^{\rm GRK,sp}$ have the forms
\begin{align*}
\frac{\delta F_U^{\rm GRK,sp}[f]}{\delta f_\mu(t)} &= 
-\sum_{m=1}^3\frac 1{3\mathrm{Tr} {\rho}^2_m} \left(\mathrm{Tr}\left[\Phi_U{\rho}_m\Phi_T^f{(\mathcal{N}^{f}_t)}^\mu\rho_m\right]\right).\\
\frac{\delta^2 F_U^{\rm GRK,sp}[f]}{\delta f_\mu(t_1) \delta f_\nu(t_2)}
&=-\sum_{m=1}^3 \frac 1{3\mathrm{Tr} {\rho}^2_m}\left(\mathrm{Tr}\left[\Phi_U{\rho}_m\Phi_T^f\hat{T}\{{(\N_{t_1}^{f})}^\mu{(\N_{t_2}^{f})}^\nu\}\rho_m\right]\right)
\end{align*}
\end{proposition}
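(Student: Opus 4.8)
The plan is to obtain both formulas as a direct corollary of Proposition~\ref{propositionMainGrHess}, applied with $\mathcal{J}=J_U^{\rm GRK,sp}$ and $\mathcal{F}=F_U^{\rm GRK,sp}$; the only real work is to compute the first and second Fr\'echet derivatives of the kinematic functional $J_U^{\rm GRK,sp}$ with respect to the channel $\Phi$ (regarded, after realification as in the set-up of Proposition~\ref{propositionMainGrHess}, as a point of $GL(\mathcal{V})$) and then to substitute.

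First I would exploit the key structural feature of $J_U^{\rm GRK,sp}$: for each $m$ the map $\Phi\mapsto\Tr(\Phi\rho_m\,\Phi_U\rho_m)$ is \emph{linear} in the superoperator $\Phi$, because $\Phi_U\rho_m$ is fixed and $\Phi\mapsto\Phi\rho_m$ is linear; its value is moreover real, since $\Phi\rho_m$ and $\Phi_U\rho_m$ are Hermitian, so the real Fr\'echet calculus applies without caveats. Consequently $J_U^{\rm GRK,sp}$ is an affine function of $\Phi$, it is $C^2$ on $GL(\mathcal{V})$, its second Fr\'echet derivative vanishes identically, $\delta^2 J_U^{\rm GRK,sp}/\delta\Phi^2\equiv 0$, and its first Fr\'echet derivative is the constant linear functional acting on a superoperator-valued tangent vector $X$ by
\[
\frac{\delta J_U^{\rm GRK,sp}(\Phi)}{\delta\Phi}[X]=-\sum_{m=1}^3\frac{1}{3\Tr\rho_m^2}\,\Tr\big((X\rho_m)(\Phi_U\rho_m)\big).
\]

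Next I would substitute this into Proposition~\ref{propositionMainGrHess}. For the gradient, take $X=\Phi_T^f(\N_t^f)^\mu$ in the displayed functional; noting that $(\N_t^f)^\mu$ acts on $\rho_m$ first and $\Phi_T^f$ on the result, cyclicity of the trace rewrites $\Tr\big((\Phi_T^f(\N_t^f)^\mu\rho_m)(\Phi_U\rho_m)\big)$ as $\Tr\big(\Phi_U\rho_m\,\Phi_T^f(\N_t^f)^\mu\rho_m\big)$, which is exactly the claimed gradient. For the Hessian, the second term of Proposition~\ref{propositionMainGrHess} carries the factor $\delta^2 J_U^{\rm GRK,sp}/\delta\Phi^2$, which we have just shown is zero, so only the first term survives; inserting $X=\Phi_T^f\hat{T}\{(\N_{t_1}^f)^\mu(\N_{t_2}^f)^\nu\}$ into the linear functional and applying cyclicity once more produces the asserted expression. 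It is worth recording along the way that $F_U^{\rm GRK,sp}$ is precisely the linear part of $F_U^{\rm GRK,sd}$ (the quadratic term $\Tr((\Phi\rho_m)^\dagger\Phi\rho_m)$ is dropped), which is the structural reason the Hessian here lacks the extra quadratic contribution present in Proposition~\ref{propositionGradientHessianF2}.

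The step needing the most care --- the ``main obstacle'', although it is really bookkeeping rather than a substantive difficulty --- is keeping straight the three layers of objects involved: superoperators acting on matrices ($\Phi_T^f$, $(\N_t^f)^\mu$, their compositions, and the chronological product $\hat{T}\{\cdot\}$), ordinary $N\times N$ matrix products inside the trace, and the dagger-free pairing $\Tr(AB)$, which is legitimate here only because $\rho_m$, $\Phi\rho_m$, $\Phi_U\rho_m$ are Hermitian. Once this dictionary is fixed, it only remains to check that each rearrangement above is an honest cyclic permutation of the trace and that no $\hat{T}$-ordering is disturbed, which completes the proof.
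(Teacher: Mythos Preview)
Your proposal is correct and follows exactly the route the paper intends: Propositions~\ref{propositionGradientHessianF1}--\ref{propositionGradientHessianF3} are stated without separate proofs, as direct specializations of Proposition~\ref{propositionMainGrHess}, and your argument carries out precisely that specialization --- computing $\delta J_U^{\rm GRK,sp}/\delta\Phi$, observing that affinity kills $\delta^2 J_U^{\rm GRK,sp}/\delta\Phi^2$, and substituting. The bookkeeping you flag (Hermiticity of $\Phi\rho_m$, $\Phi_U\rho_m$ making the trace pairing real, and cyclicity) is the only content, and you have it right.
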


\section{Realification of the Quantum System and of the Objective Functionals}  \label{Sec:AppendixC}

Realification of the set of density matrices and the set of operators on this set is important in the context of optimization in order to reduce computational space and time cost. It is done by performing only real-value calculations instead of complex-value calculations. As it will be shown further, this enables reduction by half the dimension of spaces carrying states and operators on states.

\subsection{Parametrization of the Density Matrix}

Taking into account the hermiticity of the density matrix, we consider a~parameterization of density matrix $\rho$ by a~real vector $x=x_\rho=(x_1,\ldots,x_{16}) \in \R^{16}$. More specifically, we consider the following expansion in the special Hermitian basis $\{M_k\}$:
\begin{equation}
\rho = \sum_{j = 1}^{16} x_j M_j 
= \begin{pmatrix}
x_1 & x_2 + i x_3 & x_4 + i x_5 & x_6 + i x_7 \\
x_2 - i x_3 & x_8 & x_9 + i x_{10} & x_{11} + i x_{12} \\
x_4 - i x_5 & x_9 - i x_{10} & x_{13} & x_{14} + i x_{15} \\
x_6 - i x_7 & x_{11} - i x_{12} & x_{14} - i x_{15} & x_{16} 
\label{rho_parameterization}
\end{pmatrix}.
\end{equation}
The condition ${\rm Tr}\rho=1$ implies linear dependence $x_1+x_8+x_{13}+x_{16}=1$.

Rewritten for the coordinate vector $x$, the dynamical systems 1, 2, and 3 have the following form:
\begin{equation}
\label{dynamical_system_x_common_form}
\der{x^{u,n}}{t} = L^{u,n}_t x, \qquad x^{u,n}(0) = x_{\rho_0},
\end{equation}
where $L_t^{u,n}$ is the matrix of the generator $\L_t^{u,n}$ in the basis $M = \{M_j\}_{j = 1}^{16}$~(\ref{rho_parameterization}):
\begin{equation*}
    L_t^{u,n} = \left(A + B_u u(t) + B_{n_1} n_1(t) + B_{n_2} n_2(t) \right),
\end{equation*}
the $16 \times 16$ matrices $A$, $B_u$, $B_{n_1}$, $B_{n_2}$ are found by substituting the expansion~(\ref{rho_parameterization}) into the master equation~(\ref{Eq:ME2}) with the Hamiltonian~(\ref{total_Hamiltonian}) and the dissipator~(\ref{dissipator}) for each system 1, 2, and 3 defined in Section~\ref{Sec:ME}; $x_{\rho_0}$ is the coordinate vector of $\rho_0$. For the interaction Hamiltonians $V_1$ and $V_2$, these dynamical systems and initial conditions were explicitly written in~\cite{Morzhin_Pechen_LJM_2021}. For brevity, here we do not reproduce these equations and the corresponding matrices. 

Introduce an evolution operator $\Psi^{u,n}_t$ which is a~matrix of the evolution operator $\Phi^{u,n}_t$ in the basis $M = \{M_j\}_{j = 1}^{16}$~(\ref{rho_parameterization}), gives evolution of a~vector~$x$:
$x^{u,n}(t) = \Psi^{u, n}_t x_{\rho_0}$, and satisfies the equation:
\begin{equation}
\der{\Psi}{t} =  L_t^{u,n} \Psi, \qquad \Psi_0 = \I. \label{evolution_matrix}
\end{equation}
Note that any quantum channel $\Phi$ maps density matrices to density matrices, therefore a~matrix $\Psi$ of any quantum channel is real-valued in the Hermitian basis:
\begin{equation*}
    \Psi \in \R^{N\times N} \subset \C^{N\times N}.
    \label{Psi_real}
\end{equation*}

\subsection{Objective Functionals and Their Gradients in Terms of Real-Valued States}

Here we provide expressions for the objective kinematic functionals and their gradient in the real-valued parameterization proposed in~(\ref{rho_parameterization}). Firstly, we rewrite expressions for the objective kinematic functionals $J_U^{\rm sd}$, $J_U^{\rm GRK,sd}$, and $J_U^{\rm GRK,sp}$ as functionals of a~matrix $\Psi$ of an operator $\Phi$ in the basis $M = \{M_j\}_{j = 1}^{16}$. Then we rewrite expressions for the gradient of the objective dynamic functionals in a convenient and effective for optimization form.

The first functional $J_U^{\rm sd}$ needs calculation of the squared Hilbert--Schmidt norm $\|\Phi\|^2 = \Tr(\Phi^\dagger\Phi)$. For that we prove the following proposition.
\begin{proposition}
    Let $\Psi$ and $\Psi'$ be matrices of operators $\Phi$ and $\Phi'$ from $\L(\C^{N\times N})$, respectively, in an orthogonal basis $M = \{M_j\}_{j = 1}^{N^2}$, $M_j \in \C^{N\times N}$, $\scalarproduct{M_i}{M_j} = \Tr (M_i^\dagger M_j) = \beta_j\delta_{ij}$; $i,j = 1, \ldots, N^2$. Then the Hilbert--Schmidt scalar product of $\Phi, \Phi' \in \L(\C^{N\times N})$ equals
    \begin{equation}
        \scalarproduct{\Phi}{\Phi'} = \sum_{i,j = 1}^{N^2}\frac{\beta_i}{\beta_j}\overline{\Psi}_{ij}\Psi'_{ij}.
        \label{proposition_HS_sp}
    \end{equation}
\end{proposition}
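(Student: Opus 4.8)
The plan is to reduce the claim to the elementary fact that the Hilbert--Schmidt inner product on $\L(\C^{N\times N})$, namely $\scalarproduct{\Phi}{\Phi'}=\Tr(\Phi^\dagger\Phi')$ with $\Phi^\dagger$ the adjoint relative to the Hilbert--Schmidt inner product $\scalarproduct{A}{B}=\Tr(A^\dagger B)$ on $\C^{N\times N}$, can be evaluated by summing over \emph{any} orthonormal basis of $\C^{N\times N}$: for such a basis $\{e_\alpha\}_{\alpha=1}^{N^2}$ one has $\Tr(\Phi^\dagger\Phi')=\sum_\alpha\scalarproduct{e_\alpha}{\Phi^\dagger\Phi'e_\alpha}=\sum_\alpha\scalarproduct{\Phi e_\alpha}{\Phi'e_\alpha}$, the last equality being the defining property of the adjoint. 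Once this is in hand, the proposition follows by picking a convenient basis and substituting the matrix representations.

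First I would normalize the given family: since $\scalarproduct{M_i}{M_j}=\beta_j\delta_{ij}$ with $\beta_j>0$, the rescaled vectors $\widehat{M}_j:=M_j/\sqrt{\beta_j}$ form an orthonormal basis of $\C^{N\times N}$. Next I would rewrite the two superoperators in this basis. By definition of the matrix of an operator in the basis $\{M_j\}$, $\Phi M_j=\sum_i\Psi_{ij}M_i$, hence $\Phi\widehat{M}_j=\sum_i\Psi_{ij}\sqrt{\beta_i/\beta_j}\,\widehat{M}_i$, and similarly $\Phi'\widehat{M}_j=\sum_k\Psi'_{kj}\sqrt{\beta_k/\beta_j}\,\widehat{M}_k$.

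Finally I would apply the basis-sum identity with $e_\alpha=\widehat{M}_\alpha$ and invoke orthonormality of $\{\widehat{M}_j\}$: $\scalarproduct{\Phi}{\Phi'}=\sum_j\scalarproduct{\Phi\widehat{M}_j}{\Phi'\widehat{M}_j}=\sum_j\sum_{i,k}\overline{\Psi}_{ij}\,\Psi'_{kj}\,\frac{\sqrt{\beta_i\beta_k}}{\beta_j}\,\scalarproduct{\widehat{M}_i}{\widehat{M}_k}=\sum_j\sum_{i,k}\overline{\Psi}_{ij}\,\Psi'_{kj}\,\frac{\sqrt{\beta_i\beta_k}}{\beta_j}\,\delta_{ik}=\sum_{i,j}\frac{\beta_i}{\beta_j}\,\overline{\Psi}_{ij}\,\Psi'_{ij}$, which is exactly formula~(\ref{proposition_HS_sp}). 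I do not anticipate a genuine obstacle; the only point requiring care is bookkeeping of conventions --- that $\scalarproduct{\cdot}{\cdot}$ is antilinear in its first argument (as written, $\scalarproduct{A}{B}=\Tr(A^\dagger B)$), that matrix entries are indexed so that columns encode images of basis vectors ($\Phi M_j=\sum_i\Psi_{ij}M_i$), and that the $\dagger$ in $\Tr(\Phi^\dagger\Phi')$ refers to the adjoint for the \emph{same} Hilbert--Schmidt inner product on $\C^{N\times N}$ --- this is precisely what validates the basis-sum identity and forces the cross terms with $i\ne k$ to vanish.
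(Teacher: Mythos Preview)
Your proof is correct and follows essentially the same approach as the paper's own argument: both compute $\scalarproduct{\Phi}{\Phi'}=\Tr(\Phi^\dagger\Phi')$ by expanding in the basis $\{M_j\}$, the only organizational difference being that the paper first derives the matrix of the adjoint, $\Psi^\dagger_{ij}=\tfrac{\beta_i}{\beta_j}\overline{\Psi}_{ji}$, and then takes the matrix trace of $\Psi^\dagger\Psi'$, whereas you normalize to $\widehat{M}_j=M_j/\sqrt{\beta_j}$ and apply the orthonormal-basis identity $\Tr(\Phi^\dagger\Phi')=\sum_j\scalarproduct{\Phi\widehat{M}_j}{\Phi'\widehat{M}_j}$ directly. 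The two routes are equivalent and the bookkeeping you flag (antilinearity convention, column indexing, meaning of $\dagger$) is handled correctly.
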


\begin{proof}
    Let $\rho$ and $\sigma$ be matrices in $\C^{N\times N}$ and have coordinates $x$ and $y$, respectively, in the orthogonal basis $M$. Then their scalar product equals
    \begin{equation}
        \scalarproduct{\rho}{\sigma} = \overline{(\beta \circ x)}^Ty,
        \label{scalar_product_parameterization}
    \end{equation}
    where ``$\circ$`` denotes the Hadamard product that returns the vector which components equal
    \begin{equation*}
        (\beta \circ x)_j = \beta_j x_j,\quad j = 1, \ldots, N^2.
    \end{equation*}
    Denote by ${\Psi}^\dagger$ a~matrix of ${\Phi}^\dagger$ in the basis $M$. From the equality $\scalarproduct{\rho}{\Phi \sigma} = \scalarproduct{{\Phi}^\dagger \rho}{\sigma}$ we have that the components of ${\Psi}^\dagger$ equal
    \begin{equation*}
        {\Psi}^\dagger_{ij} = \frac{\beta_i}{\beta_j}\overline{\Psi}_{ji}.
    \end{equation*}
    Then the Hilbert--Schmidt scalar product equals $$\scalarproduct{\Phi}{\Phi'} = \Tr {\Psi}^\dagger\Psi' = \sum_{i,j = 1}^N\frac{\beta_i}{\beta_j}\overline{\Psi}_{ij}\Psi'_{ij}.$$ 
\end{proof}

The considered parameterization basis $M = \{M_j\}_{j = 1}^{16}$~(\ref{rho_parameterization}) is orthogonal, the vector of squared norms $\beta_j = \Tr M_j^\dagger M_j$ equals
\begin{equation*}
    \beta = (1, 2, 2, 2, 2, 2, 2, 1, 2, 2, 2, 2, 1, 2, 2, 1).
\end{equation*}
Equations~(\ref{proposition_HS_sp}) and~(\ref{scalar_product_parameterization}) justify introducing scalar products and norms in $\R^{16}$ and in $\R^{16\times16}$:
\begin{align*}
    \scalarproduct{x}{y}_M &= \scalarproduct{\beta \circ x}{y} = \sum_{j = 1}^{16} \beta_j x_j y_j,\quad \norm{x}^2_M = \scalarproduct{x}{x}_M,\quad x,y \in \R^{16};\\
    \scalarproduct{\Psi}{\Psi'}_M &= \sum_{i,j = 1}^{16}\frac{\beta_i}{\beta_j}\Psi^1_{ij}\Psi'_{ij},\quad \norm{\Psi}^2_M = \scalarproduct{\Psi}{\Psi}_M,\quad \Psi, \Psi'\in \R^{16\times16}.
\end{align*}
Thus, we can write the objective kinematic functionals as a~function of $\Psi$ as follows:
\begin{proposition}
The objective kinematic functionals $J_U^{\rm sd}$, $J_U^{\rm GRK,sd}$, and $J_U^{\rm GRK,sp}$ as functionals of a~matrix $\Psi$ of an operator $\Phi$ in the basis $\{M_j\}_{j = 1}^{16}$ used in~(\ref{rho_parameterization}) are equal to
\begin{align*}
J_U^{\rm sd}(\Phi) &= \frac1{32}\norm{\Psi - \Psi_U}^2_M, \\
J_U^{\rm GRK,sd}(\Phi) &= \frac{1}{6}\sum_{m = 1}^3 \norm{\Psi x_{\rho_m} - \Psi_Ux_{\rho_m}}^2_M, \\
 J_U^{\rm GRK,sp}(\Phi) &= 1 - \frac{1}{3}\sum_{m = 1}^3 \dfrac{\scalarproduct{\Psi x_{\rho_m}}{\Psi_U x_{\rho_m}}_M}{\norm{x_{\rho_m}}^2_M}.
\end{align*}
where  $\Psi_U$ is a~matrix of $\Phi_U$ and $x_{\rho_m}$ is the coordinate vectors of $\rho_m$, $m = 1,2,3$, in the parameterization basis~(\ref{rho_parameterization}).
\end{proposition}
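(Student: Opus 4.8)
The plan is to push each of the three kinematic functionals through the two identities already proved above: equation~(\ref{proposition_HS_sp}) for the Hilbert--Schmidt scalar product of superoperators written as matrices in the basis $M=\{M_j\}_{j=1}^{16}$, and equation~(\ref{scalar_product_parameterization}) for the Hilbert--Schmidt scalar product of matrices in terms of their coordinate vectors. The single conceptual observation that makes everything work is that all objects occurring inside these functionals are \emph{real} in the Hermitian basis~(\ref{rho_parameterization}): a quantum channel sends Hermitian matrices to Hermitian matrices, hence so does a difference $\Phi-\Phi_U$ of two channels, so the matrix of $\Phi-\Phi_U$ in~(\ref{rho_parameterization}) is real and equals $\Psi-\Psi_U$; likewise $\Phi\rho_m$ and $\Phi_U\rho_m$ are Hermitian, so their coordinate vectors $\Psi x_{\rho_m}$ and $\Psi_U x_{\rho_m}$ are real. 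Consequently the complex conjugations in~(\ref{proposition_HS_sp}) and~(\ref{scalar_product_parameterization}) drop out, and the right-hand sides collapse exactly onto the weighted real pairings $\scalarproduct{\cdot}{\cdot}_M$ and norms $\norm{\cdot}_M$ introduced before the proposition.

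For $J_U^{\rm sd}$ I would write $\norm{\Phi-\Phi_U}^2=\scalarproduct{\Phi-\Phi_U}{\Phi-\Phi_U}$ and apply~(\ref{proposition_HS_sp}) to the operator $\Phi-\Phi_U\in\L(\C^{N\times N})$, whose matrix is $\Psi-\Psi_U$; realness turns the resulting double sum into $\norm{\Psi-\Psi_U}_M^2$, and inserting $N^2=16$ into the prefactor $\tfrac1{2N^2}$ yields $J_U^{\rm sd}(\Phi)=\tfrac1{32}\norm{\Psi-\Psi_U}_M^2$. For $J_U^{\rm GRK,sd}$, each summand $\norm{\Phi\rho_m-\Phi_U\rho_m}^2$ is the self-pairing of a Hermitian matrix with coordinate vector $\Psi x_{\rho_m}-\Psi_U x_{\rho_m}$; applying~(\ref{scalar_product_parameterization}) (again with the conjugation trivial) gives $\norm{\Psi x_{\rho_m}-\Psi_U x_{\rho_m}}_M^2$, and summing with the factor $1/6$ gives the stated formula. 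For $J_U^{\rm GRK,sp}$, I would first note that for Hermitian $A,B$ one has $\Tr(AB)=\Tr(A^\dagger B)=\scalarproduct{A}{B}$, so $\Tr(\Phi\rho_m\,\Phi_U\rho_m)=\scalarproduct{\Phi\rho_m}{\Phi_U\rho_m}=\scalarproduct{\Psi x_{\rho_m}}{\Psi_U x_{\rho_m}}_M$ by~(\ref{scalar_product_parameterization}), while $\Tr\rho_m^2=\scalarproduct{\rho_m}{\rho_m}=\norm{x_{\rho_m}}_M^2$; substituting both into the definition of $J_U^{\rm GRK,sp}$ finishes the derivation.

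None of the steps is genuinely difficult. The only places where care is needed are the Hermiticity/realness bookkeeping that legitimizes dropping the conjugations in~(\ref{proposition_HS_sp}) and~(\ref{scalar_product_parameterization}), and keeping straight that the relevant dimension in the $J_U^{\rm sd}$ prefactor is $N^2=16$ (the two-qubit operator space), so that $\tfrac1{2N^2}=\tfrac1{32}$.
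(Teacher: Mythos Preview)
Your proposal is correct and follows exactly the route the paper sets up: the paper itself does not write out a separate proof of this proposition but presents it as an immediate consequence of the preceding proposition (equations~(\ref{proposition_HS_sp}) and~(\ref{scalar_product_parameterization})) together with the definitions of $\scalarproduct{\cdot}{\cdot}_M$ and $\norm{\cdot}_M$ and the observation that channel matrices in the Hermitian basis are real. You have simply spelled out the bookkeeping that the paper leaves implicit.
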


Let control $f = (u, n_1, n_2)$ be a piecewise constant control given by~(\ref{piecewise_constant_controls}). In addition,  consider control $g = (u, w_1, w_2)$, where $w_1$ and $w_2$ are given by the change of variable~(\ref{piecewise_constant_control_w}). Denote by $x_{\rho_m}^f(t)$ the solution of the system~(\ref{dynamical_system_x_common_form}) with the initial condition $x_{\rho_m}^f(0) = x_{\rho_m}$. When considering piecewise constant controls, the generator matrix $L^f_t$ is constant on the intervals: $L_t^f = L^f_{t_{k-1}} = A + B_uu_k + B_{n_1}n_{k,1} + B_{n_2}n_{k,2}$ for any $t \in [t_{k-1}, t_k)$. The evolution matrix values~$\Psi_{t_k}^f$ on the borders $t_k$ are equal to $\Psi^f_{t_k} = \exp(\Delta t L^f_{t_{k-1}}) \cdots \exp(\Delta t L^f_0)$. For the considered piecewise constant control, the objective functional $F_U$ become function of $3K$ variables $g_{k,\mu}$. 

For the optimization process described in  Section~\ref{Section5}  we use the following expressions for gradient of the objective functionals $F_U^{\rm sd}$, $F_U^{\rm GRK,sd}$, and $F_U^{\rm GRK,sp}$ which are derived as direct corollary of Propositions~\ref{propositionGradientHessianF1},~\ref{propositionGradientHessianF2},~and~\ref{propositionGradientHessianF3} in Appendix~\ref{Sec:AppendixB}.
\begin{proposition}
    Gradient of the functional $F_U^{\rm sd}$ has the form
    \begin{equation}
        \pd{F_U^{\rm sd}}{g_{k,\mu}} = \frac1{16}\scalarproductleftright{\Psi_T^f - \Psi_U}{\pd{\Psi^f_T}{g_{k,\mu}}}_M.\label{gradient_pconst_F1}
    \end{equation}
\end{proposition}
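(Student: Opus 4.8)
The plan is to reduce $F_U^{\rm sd}$ to its realified quadratic form and then differentiate by the chain rule. By the realified expressions for the kinematic functionals established above, $J_U^{\rm sd}(\Phi) = \tfrac1{32}\norm{\Psi - \Psi_U}^2_M$, hence for the dynamical functional with piecewise constant control $g = (u, w_1, w_2)$ one has $F_U^{\rm sd}(f) = J_U^{\rm sd}(\Phi_T^f) = \tfrac1{32}\norm{\Psi_T^f - \Psi_U}^2_M$, where $\Psi_T^f$ is the realified evolution matrix. First I would record that $g \mapsto \Psi_T^f$ is smooth: for piecewise constant controls $\Psi_T^f = \exp(\Delta t\, L^f_{t_{K-1}})\cdots\exp(\Delta t\, L^f_0)$ with $L^f_{t_{k-1}} = A + B_u u_k + B_{n_1} n_{k,1} + B_{n_2} n_{k,2}$ and $n_{k,l} = w_{k,l}^2$, so $\Psi_T^f$ is a finite product of matrix exponentials of entries depending smoothly (polynomially) on $g$; in particular the partial derivatives $\pd{\Psi^f_T}{g_{k,\mu}}$ exist in $\R^{16\times16}$.

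Next I would note that $\scalarproduct{\cdot}{\cdot}_M$ on $\R^{16\times16}$ is a \emph{symmetric} bilinear form, since $\tfrac{\beta_i}{\beta_j}\Psi_{ij}\Psi'_{ij}$ is invariant under exchanging $\Psi\leftrightarrow\Psi'$. Writing $\Delta := \Psi_T^f - \Psi_U$ and using that $\Psi_U$ does not depend on $g$, the chain rule for the composition of the smooth map $g\mapsto\Psi_T^f$ with the quadratic map $\Psi\mapsto\tfrac1{32}\norm{\Psi-\Psi_U}^2_M$ gives
\[
\pd{F_U^{\rm sd}}{g_{k,\mu}} = \frac1{32}\,\pd{}{g_{k,\mu}}\scalarproduct{\Delta}{\Delta}_M = \frac1{32}\cdot 2\,\scalarproductleftright{\Delta}{\pd{\Psi_T^f}{g_{k,\mu}}}_M = \frac1{16}\,\scalarproductleftright{\Psi_T^f - \Psi_U}{\pd{\Psi_T^f}{g_{k,\mu}}}_M,
\]
which is exactly~(\ref{gradient_pconst_F1}); the factor $\tfrac1{16}$ is $\tfrac1{32}$ from $J_U^{\rm sd}$ times the $2$ from differentiating the square, equivalently $1/N^2$ with $N=4$.

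To connect with Appendix~\ref{Sec:AppendixB} as claimed, one may alternatively start from the Fr\'echet-derivative formula of Proposition~\ref{propositionGradientHessianF1}, realify $\Tr((\Phi_T^f-\Phi_U)^\dagger\,\cdot\,)$ into $\scalarproduct{\Psi_T^f-\Psi_U}{\,\cdot\,}_M$ (with $1/N^2 = 2/(2N^2)$ already carrying the factor $2$), and then compute the directional derivative along $\partial/\partial g_{k,\mu}$: for $\mu=1$ the perturbation of $u$ is $\theta_{[t_k,t_{k+1})}$, while for $\mu=2,3$ the perturbation of $n_\mu$ is $2g_{k,\mu}\,\theta_{[t_k,t_{k+1})}$ through the substitution $n=w^2$, so that in each case $\int_0^T \tfrac{\delta F_U^{\rm sd}[f]}{\delta f_\mu(t)}\,\pd{f_\mu(t)}{g_{k,\mu}}\,\d t = \tfrac1{16}\scalarproductleftright{\Psi_T^f-\Psi_U}{\pd{\Psi^f_T}{g_{k,\mu}}}_M$, since $\Phi_T^f(\N^f_t)^\mu$ integrated against $\pd{f_\mu}{g_{k,\mu}}$ is precisely the realified $\pd{\Phi_T^f}{g_{k,\mu}}$. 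The main obstacle is purely bookkeeping: checking that $\scalarproduct{\cdot}{\cdot}_M$ is the correct pairing that turns $\delta\mathcal{J}/\delta\Phi$ at $\Phi_T^f$ into $\tfrac1{16}(\Psi_T^f-\Psi_U)$, and that the chain-rule factor $2g_{k,\mu}$ for the incoherent components is absorbed into $\pd{\Psi^f_T}{g_{k,\mu}}$ rather than appearing explicitly; smoothness of $g\mapsto\Psi_T^f$, symmetry of the bilinear form, and interchange of differentiation with the finite product are routine for finite-dimensional smooth maps.
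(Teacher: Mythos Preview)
Your proposal is correct. The paper gives no explicit proof here, only stating that the formula is a ``direct corollary'' of Proposition~\ref{propositionGradientHessianF1}; your second paragraph follows exactly that route (realify the Fr\'echet gradient and pair it with the piecewise-constant perturbation), while your first paragraph gives an even more elementary argument by applying the chain rule directly to the realified quadratic $F_U^{\rm sd}=\tfrac1{32}\norm{\Psi_T^f-\Psi_U}^2_M$, using symmetry of $\scalarproduct{\cdot}{\cdot}_M$ on real matrices. Both are valid and lead to the same $\tfrac1{16}=\tfrac{2}{2N^2}$ prefactor.
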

\begin{proposition}
    Gradient of the functional $F_U^{\rm GRK,sd}$ has the form
   \begin{equation}
        \pd{F_U^{\rm GRK,sd}}{g_{k,\mu}}  = \frac13\sum_{m = 1}^3\scalarproductleftright{\Psi_T^f x_{\rho_m} - \Psi_Ux_{\rho_m}}{\pd{x^f_{\rho_m}(T)}{g_{k,\mu}}}_M.
    \end{equation}\label{gradient_pconst_F2}
\end{proposition}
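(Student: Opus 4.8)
The plan is to reduce the statement to the explicit realified expression for $J_U^{\rm GRK,sd}$ obtained above and then differentiate it by the chain rule. Recall that $F_U^{\rm GRK,sd}(g) = J_U^{\rm GRK,sd}(\Phi_T^f) = \frac16\sum_{m=1}^3 \norm{\Psi_T^f x_{\rho_m} - \Psi_U x_{\rho_m}}_M^2$, where $\Psi_T^f$ is the real matrix of $\Phi_T^f$ in the basis $M = \{M_j\}_{j=1}^{16}$, while $\Psi_U$ and the coordinate vectors $x_{\rho_m}$ do not depend on the controls. For piecewise constant controls one has $\Psi_T^f = \exp(\Delta t\, L^f_{t_{K-1}})\cdots\exp(\Delta t\, L^f_{t_0})$ with $L^f_{t_{k-1}} = A + B_u u_k + B_{n_1} n_{k,1} + B_{n_2} n_{k,2}$ and $n_{k,l} = w_{k,l}^2$, so $g \mapsto \Psi_T^f$ is a composition of smooth maps (the affine dependence of $L^f$ on $f$, the polynomial change of variables $n_l = w_l^2$, and matrix exponentials) and is therefore $C^\infty$; in particular the partial derivative $\pd{\Psi_T^f}{g_{k,\mu}}$ exists. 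This differentiability is, equivalently, the piecewise constant specialization of the Fréchet differentiability established in Proposition~\ref{propositionMainGrHess}.

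Since $\scalarproduct{\cdot}{\cdot}_M$ is a symmetric bilinear form on $\R^{16}$ evaluated here on real-valued vectors, differentiating the quadratic form term by term gives, for each $m$, $\pd{}{g_{k,\mu}}\norm{\Psi_T^f x_{\rho_m} - \Psi_U x_{\rho_m}}_M^2 = 2\,\scalarproduct{\Psi_T^f x_{\rho_m} - \Psi_U x_{\rho_m}}{\pd{}{g_{k,\mu}}\big(\Psi_T^f x_{\rho_m} - \Psi_U x_{\rho_m}\big)}_M$. Because $x_{\rho_m}$ is a fixed vector and $\Psi_U$ is control-independent, $\pd{}{g_{k,\mu}}\big(\Psi_T^f x_{\rho_m} - \Psi_U x_{\rho_m}\big) = \big(\pd{\Psi_T^f}{g_{k,\mu}}\big) x_{\rho_m} = \pd{x_{\rho_m}^f(T)}{g_{k,\mu}}$, where the last equality uses $x_{\rho_m}^f(T) = \Psi_T^f x_{\rho_m}$. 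Summing over $m$ and collecting the prefactor $\tfrac{2}{6} = \tfrac13$ yields the claimed formula. One may alternatively derive the same identity by realifying the functional-space gradient of Proposition~\ref{propositionGradientHessianF2}, turning the traces into $M$-scalar products via the proposition relating $\Tr$ to $\scalarproduct{\cdot}{\cdot}_M$, and integrating it against the piecewise constant variation $\pd{f_\nu(t)}{g_{k,\mu}}$.

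I do not expect a genuine obstacle here: the only points requiring a line of justification are the smoothness of $g\mapsto\Psi_T^f$ (smoothness of matrix exponentials together with the reparametrization $n_l = w_l^2$) and the interchange of differentiation with the finite sum over $m$, both routine. The computationally substantive issue — an $O(K)$ recursion for evaluating $\pd{x_{\rho_m}^f(T)}{g_{k,\mu}}$ at all $k,\mu$ efficiently, which is what actually makes inGRAPE practical — is a separate matter and is not needed for the identity itself.
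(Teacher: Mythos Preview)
Your proof is correct and essentially matches the paper's treatment: the paper does not give a detailed argument for this proposition but states it as a ``direct corollary'' of Proposition~\ref{propositionGradientHessianF2}, which is exactly your alternative derivation, while your primary route (differentiating the realified quadratic form $\frac16\sum_m\norm{\Psi_T^f x_{\rho_m}-\Psi_U x_{\rho_m}}_M^2$ via the chain rule) is an even more self-contained version of the same computation.
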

\begin{proposition}
    Gradient of the functional $F_U^{\rm GRK,sp}$ has the form
    \begin{equation}
        \pd{F_U^{\rm GRK,sp}}{g_{k,\mu}}  = -\frac13\sum_{m = 1}^3\frac{1}{\norm{x_{\rho_m}}^2_M}\scalarproductleftright{\Psi_U x_{\rho_m}}{\pd{x^f_{\rho_m}(T)}{g_{k,\mu}}}_M.
    \end{equation}\label{gradient_pconst_F3}
\end{proposition}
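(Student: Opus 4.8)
The plan is to obtain this Proposition as a direct application of the chain rule, exploiting the fact that $F_U^{\rm GRK,sp}$ depends on the control vector $g=(u,w_1,w_2)$ only through the three final state vectors $x^f_{\rho_m}(T)=\Psi^f_T x_{\rho_m}$, $m=1,2,3$, and does so affinely. Starting from the realified expression for $J_U^{\rm GRK,sp}$ proved above,
\[
F_U^{\rm GRK,sp}(g)=1-\frac13\sum_{m=1}^3\frac{1}{\norm{x_{\rho_m}}^2_M}\,\scalarproduct{x^f_{\rho_m}(T)}{\Psi_U x_{\rho_m}}_M ,
\]
where $f=(u,w_1^2,w_2^2)$ is the control determined by $g$, the vectors $\Psi_U x_{\rho_m}$ (the coordinates of $\Phi_U\rho_m=U\rho_mU^\dagger$) and the scalars $\norm{x_{\rho_m}}^2_M=\Tr\rho_m^2$ are fixed, $g$-independent data.

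First I would record that $g\mapsto x^f_{\rho_m}(T)$ is smooth: the substitution $n_{k,l}=w_{k,l}^2$ is smooth, the generator matrix $L^f_t=A+B_uu_k+B_{n_1}n_{k,1}+B_{n_2}n_{k,2}$ on each $[t_{k-1},t_k)$ is affine in the control values, and $\Psi^f_T=\exp(\Delta t\,L^f_{t_{K-1}})\cdots\exp(\Delta t\,L^f_0)$ is a composition of matrix exponentials; hence the partial derivatives $\pd{x^f_{\rho_m}(T)}{g_{k,\mu}}$ appearing in the statement are well defined (their explicit evaluation is the content of Proposition~\ref{propositionMainGrHess} specialized to piecewise-constant controls together with the $w\mapsto w^2$ chain rule, and is not needed here). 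Differentiating the displayed formula term by term, the constant $1$ drops out, and bilinearity of $\scalarproduct{\,\cdot\,}{\,\cdot\,}_M$ in its first slot together with the symmetry of this weighted form $\scalarproduct{x}{y}_M=\sum_j\beta_jx_jy_j$ on $\R^{16}$ gives, for each $m$,
\[
\pd{}{g_{k,\mu}}\scalarproduct{x^f_{\rho_m}(T)}{\Psi_U x_{\rho_m}}_M=\scalarproductleftright{\Psi_U x_{\rho_m}}{\pd{x^f_{\rho_m}(T)}{g_{k,\mu}}}_M ;
\]
multiplying by $-\tfrac13\norm{x_{\rho_m}}^{-2}_M$ and summing over $m$ produces exactly the claimed identity.

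As a consistency check, and to honour the remark that this is a corollary of Proposition~\ref{propositionGradientHessianF3}, one can alternatively start from the Fréchet derivative $\delta F_U^{\rm GRK,sp}[f]/\delta f_\mu(t)$ there, restrict it to a piecewise-constant perturbation supported on $[t_{k-1},t_k)$ and integrate, apply the chain rule for $n_{k,l}=w_{k,l}^2$, and then translate the complex trace $\Tr[\Phi_U\rho_m\,\Phi_T^f(\N^f_t)^\mu\rho_m]=\Tr[\Phi_U\rho_m\,\delta(\Phi_T^f\rho_m)/\delta f_\mu(t)]$ into the real form using Hermiticity of $\Phi_U\rho_m$ (so $\Tr[AB]=\scalarproduct{A}{B}$ with $A$ self-adjoint) and the dictionary $\scalarproduct{\rho}{\sigma}=\scalarproduct{x_\rho}{x_\sigma}_M$ between the Hilbert--Schmidt product and $\scalarproduct{\,\cdot\,}{\,\cdot\,}_M$. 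I do not expect any conceptual obstacle anywhere; the only thing requiring care is the bookkeeping --- tracking the $\beta$-weights hidden in $\scalarproduct{\,\cdot\,}{\,\cdot\,}_M$ and the factor produced by the $w\mapsto w^2$ change of variables, both of which are deliberately absorbed into the state sensitivities $\pd{x^f_{\rho_m}(T)}{g_{k,\mu}}$ on the right-hand side of the stated formula.
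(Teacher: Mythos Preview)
Your proof is correct and follows essentially the same approach as the paper, which states this proposition without a separate proof and simply declares it a ``direct corollary'' of Proposition~\ref{propositionGradientHessianF3}; your chain-rule argument on the realified functional is exactly the kind of elementary computation that justifies that claim, and your consistency check via the Fr\'echet derivative is precisely the route the paper gestures at.
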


\begin{proposition}
    Gradient of the evolution matrix $\Psi_T^f$ and the final states $x_{\rho_m}(T,f)$ equals
    \begin{equation}
    \begin{split}
        \pd{\Psi_T^f}{u_k} &= \Delta t\Psi^f_{t_k, T}\int_0^1\exp\left((1-\tau)\Delta t L^f_{t_{k-1}}\right)B_u\exp\left(\tau\Delta t L^f_{t_{k-1}}\right)\d \tau \Psi_{t_{k-1}}^f,\\
        \pd{\Psi_T^f}{w_{k,l}} &= 2w_{k,l}\Delta t\Psi^f_{t_k, T}\int_0^1\exp\left((1-\tau)\Delta t L^f_{t_{k-1}}\right)B_{n_l}\exp\left(\tau\Delta t L^f_{t_{k-1}}\right)\d \tau \Psi_{t_{k-1}}^f\\
        \pd{x^f_{\rho_m}(T)}{u_k} &= \Delta t\Psi^f_{t_k, T}\int_0^1\exp\left((1-\tau)\Delta t L^f_{t_{k-1}}\right)B_u\exp\left(\tau\Delta t L^f_{t_{k-1}}\right)\d \tau x^f_{\rho_m}(t_{k-1}),\\
        \pd{x^f_{\rho_m}(T)}{w_{k,l}} &= 2w_{k,l}\Delta t\Psi^f_{t_k, T}\int_0^1\exp\left((1-\tau)\Delta t L^f_{t_{k-1}}\right)B_{n_l}\exp\left(\tau\Delta t L^{f}_{t_{k-1}}\right)\d \tau x^f_{\rho_m}(t_{k-1}),\label{gradient_pc_Psi_x}
    \end{split}
    \end{equation}
    $k = 1, \ldots, K$; $l = 1,2$; $m = 1, 2, 3$.
\end{proposition}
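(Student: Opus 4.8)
The plan is to exploit the product structure of the piecewise-constant propagator together with the classical formula for the derivative of a matrix exponential. Fix an index $k \in \{1,\dots,K\}$. On the $k$-th subinterval $[t_{k-1},t_k)$ the generator is the constant matrix $L^f_{t_{k-1}} = A + B_u u_k + B_{n_1}n_{k,1} + B_{n_2}n_{k,2}$, so the composition property of the evolution matrix stated in Appendix~\ref{Sec:AppendixC} gives the factorization $\Psi_T^f = \Psi^f_{t_k,T}\,\exp(\Delta t\,L^f_{t_{k-1}})\,\Psi^f_{t_{k-1}}$, where $\Psi^f_{t_k,T} = \exp(\Delta t L^f_{t_{K-1}})\cdots\exp(\Delta t L^f_{t_k})$ is the propagator from $t_k$ to $T$. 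The crucial observation is that $\Psi^f_{t_k,T}$ depends only on the control values on intervals $k+1,\dots,K$ and $\Psi^f_{t_{k-1}}$ only on those on intervals $1,\dots,k-1$; hence neither factor depends on $u_k$, $w_{k,1}$ or $w_{k,2}$, and differentiation in any of these three parameters acts solely on the middle exponential, $\pd{\Psi_T^f}{g_{k,\mu}} = \Psi^f_{t_k,T}\,\bigl(\pd{}{g_{k,\mu}}\exp(\Delta t\,L^f_{t_{k-1}})\bigr)\,\Psi^f_{t_{k-1}}$.

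The only analytic ingredient is Duhamel's formula: for a $C^1$ matrix-valued family $M(s)$ one has $\der{}{s}\exp(M(s)) = \int_0^1 \exp\bigl((1-\tau)M(s)\bigr)\,M'(s)\,\exp\bigl(\tau M(s)\bigr)\,\d\tau$. I would establish this by setting $G(s,\sigma) := \der{}{s}\exp(\sigma M(s))$, observing that it solves the inhomogeneous linear equation $\partial_\sigma G = M(s)\,G + M'(s)\exp(\sigma M(s))$ with $G(s,0)=0$, solving by variation of parameters, and putting $\sigma = 1$; alternatively this can simply be cited. Applying it with $M = \Delta t\,L^f_{t_{k-1}}$ and noting $\pd{L^f_{t_{k-1}}}{u_k} = B_u$ yields at once the first line of~(\ref{gradient_pc_Psi_x}). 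For the $w$-derivatives I would use the change of variable $n_{k,l} = w_{k,l}^2$, whence $\pd{L^f_{t_{k-1}}}{w_{k,l}} = 2w_{k,l}B_{n_l}$ by the chain rule, so that the factor $2w_{k,l}$ appears exactly as in the second line.

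For the state gradients, recall that by construction $x^f_{\rho_m}(t) = \Psi^f_t x_{\rho_m}$, with $x_{\rho_m}$ the control-independent coordinate vector of $\rho_m$; in particular $x^f_{\rho_m}(T) = \Psi^f_T x_{\rho_m}$ and $x^f_{\rho_m}(t_{k-1}) = \Psi^f_{t_{k-1}} x_{\rho_m}$. Differentiating the first of these and inserting the expression for $\pd{\Psi^f_T}{g_{k,\mu}}$ derived above, the trailing factor $\Psi^f_{t_{k-1}}x_{\rho_m}$ collapses to $x^f_{\rho_m}(t_{k-1})$, giving the last two lines of~(\ref{gradient_pc_Psi_x}).

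I do not anticipate a genuine obstacle: the Duhamel step is routine, and everything else is bookkeeping of which subinterval each control parameter influences. The one point requiring care is the left-to-right ordering convention in the chronological product — ensuring that the "future" propagator $\Psi^f_{t_k,T}$ ends up on the left and the "past" propagator $\Psi^f_{t_{k-1}}$ on the right of the differentiated exponential, consistently with the formula $\Psi^f_{t_k} = \exp(\Delta t L^f_{t_{k-1}})\cdots\exp(\Delta t L^f_0)$ recorded in Appendix~\ref{Sec:AppendixC}.
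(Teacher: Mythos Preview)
Your proposal is correct. The paper in fact states this proposition without a separate proof, relying implicitly on the machinery of Appendix~\ref{Sec:AppendixB}: Proposition~\ref{propositionMainGrHess} gives the functional derivative $\delta\Phi_T^f/\delta f_\mu(t) = \Phi_T^f(\Phi_t^f)^{-1}\mathcal{N}^\mu\Phi_t^f$, and integrating this over $[t_{k-1},t_k)$ for a piecewise-constant control, then substituting $t = t_{k-1}+\tau\Delta t$, yields exactly the integral in~(\ref{gradient_pc_Psi_x}). Your route --- factorize the propagator as $\Psi^f_{t_k,T}\exp(\Delta t\,L^f_{t_{k-1}})\Psi^f_{t_{k-1}}$ and hit the middle factor with Duhamel's formula --- is the same computation packaged more directly: Duhamel's identity is precisely what the functional-derivative formula reduces to on a subinterval where the generator is constant. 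Your version has the advantage of being self-contained (no chronological exponent needed), while the paper's framework has the advantage of simultaneously delivering the Hessian and handling arbitrary control classes; for this particular proposition the two are interchangeable.
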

Here the evolution matrix from time $s$ to time $t$ is defined as $\Psi_{s,t}^f = \Psi_t^f{\Psi_s^f}^{-1}$. Its values for $s = t_k$ and $t = T$ are equal to $\Psi^f_{t_k,T} = \exp(\Delta t L^f_{t_{K-1}}) \cdots \exp(\Delta t L^f_{t_{k}}).$

\vspace{0.3cm}

\noindent {\bf Abbreviations:} 
\begin{itemize}
\item GKSL -- Gorini--Kossakowski--Sudarshan--Lindblad;
\vspace{-0.3cm}
\item C-NOT -- controlled NOT (quantum gate);
\vspace{-0.3cm}
\item C-PHASE -- Controlled PHASE (quantum gate);
\vspace{-0.3cm}
\item C-Z -- Controlled Z (quantum gate);
\vspace{-0.3cm}
\item GRK approach -- the approach developed by M.Y.~Goerz, D.M.~Reich, and C.P.~Koch in~\cite{Goerz_NJP_2014_2021, Goerz_2021} for generating unitary gates under dissipative evolution, where only the three initial density matrices are used in contrast to the complete basis; 
\vspace{-0.3cm}
\item DAA -- Dual Annealing Algorithm;
\vspace{-0.3cm}
\item GRAPE -- GRadient Ascent Pulse Engineering;
\vspace{-0.3cm}
\item inGRAPE -- incoherent GRAPE.
\end{itemize}

\end{document}